\DeclareMathAlphabet{\pazocal}{OMS}{zplm}{m}{n}
\newtheorem{theorem}{Theorem}
\newtheorem{definition}{Definition}
\title{Election Control by Manipulating Issue Significance}
\author{Andrew Estornell, Sanmay Das, Edith Elkind, Yevgeniy Vorobeychik\\Computer Science \& Engineering, Washington University in St.~Louis\\\{aestornell,sanmay,yvorobeychik\}@wustl.edu\\ elkind@cs.ox.ac.uk}
\begin{document}
\maketitle
\begin{abstract}
Integrity of elections is vital to democratic systems, but it is frequently threatened by malicious actors.
The study of algorithmic complexity of the problem of manipulating election outcomes by changing its structural features is known as election control.
One means of election control that has been proposed 
is to select a subset of issues that determine voter preferences over candidates.
We study a variation of this model in which voters have judgments about relative importance of issues, and a malicious actor can manipulate these judgments.
We show that computing effective manipulations in this model is NP-hard even with two candidates or binary issues.
However, we demonstrate that the problem is tractable with a constant number of voters or issues.
Additionally, while it remains intractable when voters can vote stochastically, we exhibit an important special case in which stochastic voting enables tractable manipulation.
\end{abstract}

\section{INTRODUCTION}

Fair elections are at the core of democratic systems.
However, elections are increasingly subject to attack by malicious parties who aim to achieve personal goals at the expense of the social good~\citep{RussiaElection16}.
The problem of election vulnerability to  malicious attack has been studied in the broader literature on election control and bribery~\citep{how-hard-is-it-to-control-and-election,candidate-positioning:article,bribery:book}.
However, in much of this literature, control is exercised through a change in the election structure (e.g., adding and removing candidates), or directly the preferences of a subset of voters (bribery).
A major means of election control that has often been overlooked in the research literature is manipulation of \emph{issues} that ultimately determine voter preferences over candidates.

A recent model of election control through issue selection attempts to bridge this gap~\citep{bis:article}.
The basis of this model is the spatial theory of voting~\citep{downs:book,stv:book}, in which voters and candidates are represented as points in issue space, and a distance metric determines relative preferences, with voters preferring candidates who are similar to them on issues.
In control through issue selection, a malicious party can select a subset of issues that then determines similarity and, consequently, voter preferences.

While capturing some of the intuition about the kinds of manipulations we commonly see (through, say, the spread of misinformation and fake news), control through issue selection nevertheless misses an essential factor: what is ultimately important, and what is at the core of manipulation is the \emph{relative significance, or salience, of issues}, with issue selection being a rather extreme special case.
A recent examples that illustrates this point is
Brexit:
Until 2016, the significance of the issue of U.K. membership in the EU was comparatively negligible \citep{brexit16}.
In 2016, it became one of the central issues, with considerable evidence pointing to Russian interference as a factor~\citep{brexitRussia16,brexitRussia19}.
In general, malicious parties can impact perceptions of relative issue importance in a variety of ways.
For example, fake social media accounts can be used to coordinate widespread mentions of particular issues, increasing their salience compared to others.
Similarly, influential individuals, such as celebrities or politicians, may be willing to accept payments to be more or less vocal about particular issues. To reflect the relative difficulty or cost of these actions we limit the attacker by one of two constraints.

Our model is a significant generalization of the work of~\citet{bis:article}.
In our version, preferences of a voter over candidates are generated based on similarity in issue space, \emph{weighted by the relative importance of issues}.
We study the complexity of this problem in the context of plurality elections for two common models of voter behavior in the spatial framework: 1) deterministic voting, in which voters always vote for their most preferred candidate, and 2) stochastic voting, where the probability of a voter voting for a candidate is a monotonic function of weighted issue similarity.
We show that the control problem is in general NP-hard in either case, even with only 2 candidates.
Indeed, for the deterministic case we demonstrate hardness even with only $\Theta(\ell)$ voters, where $\ell$ is the number of issues.
Next, we exhibit several tractable special cases.
In the deterministic case, if the number of voters is $\Theta(\log(\ell))$, or the number of issues is constant, election control is in P.
In stochastic voting, in turn, control is tractable if the probability of voting for a candidate is linear in their weighted distance from the voter.

\noindent{\bf Related Work }
The complexity of controlling elections has seen extensive treatment, starting with the work of \citet{how-hard-is-it-to-control-and-election}; 
see \citet{anyone-but-him,normalized-range-control,control-bucklin-fallback,fewvoters} for further examples and the survey by \citet{bribery:book} for an overview.
Variations of this problem consider attacks that add, remove, partition or clone candidates or voters, for a variety of voting rules.
However, most of the prior election control literature considers election models in which voter preferences over candidates are given, rather than generated based on distance in issue space.
The spatial model of elections, in turn, has received considerable attention in prior literature~\citep{A-mathematical-model-of-preference-formation-in-a-democratic-society,stv:book,Advances-in-the-Spatial-Theory-of-Voting,A-decade-of-experimental-research-on-spatial-models-of-elections-and-committees,A-Unified-Theory-of-Voting-Directional-and-Proximity-Spatial-Models,Approximating-optimal-social-choice-under-metric-preferences,Randomized-social-choice-functions-under-metric-preferences}.
However, most of this research has focused on problems other than election control.
For example, extensive literature exists on game-theoretic models in which candidates opportunistically select positions in issue space~\citep{downs:book,Hotelling-Downs-Model-with-Limited-Attraction,Real-Candidacy-Games}.

A direct precursor to our model, combining election control with spatial theory of voting, is \citet{bis:article}, who study the model in which an adversary can select an arbitrary subset of issues in an election.
These issues are then used to generate voter preferences over candidates, with voters preferring candidates who are closest to them on the adversarially selected issues.
We significantly generalize this model by allowing the adversary to change relative importance of issues.


\section{MODEL}
Let $\pazocal{C} = \{\mathbf{c}_1, ..., \mathbf{c}_m \}$ and $\pazocal{V} = \{ \mathbf{v}_1, ..., \mathbf{v}_n\}$ be a set of candidates and voters, respectively. Each candidate and voter is a vector over a set of issues $[\ell]$ defined by $\mathbf{c}_i = \langle c_{i,1}, ..., c_{i, \ell}\rangle\in {\mathbb R}^\ell$ and $\mathbf{v}_j = \langle v_{j, 1}, ..., v_{j, \ell} \rangle\in {\mathbb R}^\ell$.
We consider plurality elections in which each voter is asked to report
the most preferred candidate, and the winner is the candidate who tallies the most votes.

Suppose that the relative importance of issues to voters is determined by a weight vector $\mathbf{w} = \langle w_1, ..., w_{\ell}\rangle$, where $0 \leq w_k \leq 1$ for $k\in [\ell]$, and $||\mathbf{w}||_1 = 1$.
In this model, a voter's preferences over candidates are determined by the weighted distance from each candidate in issue space.
Formally, the weighted distance between a voter $\mathbf{v}_j$ and candidate $\mathbf{c}_i$ is
\begin{align*}
d_{(i,j)} = \bigg(\sum_{k = 1}^{\ell} w_k\big|c_{i,k} - v_{j,k}\big|^p \bigg)^{1/p},
\end{align*}
$p \ge 1$, and the voter prefers a candidate who is closer according to this weighted distance measure.

Without loss of generality, suppose that $\mathbf{c}_1$ is the attacker's preferred candidate.
In our model,  an attacker aims to influence the election by modifying the relative importance of issues.
Specifically, the attacker changes $\mathbf{w}$ into a modified preference vector $\mathbf{w}' = \mathbf{w} + \mathbf{x}$. Restrictions on the attacker's strength are given in one of the following forms.
\begin{enumerate}
	\item \textbf{Normed Budget Constraint: } Given a budget $B\in \mathbb{R}$ the attacker's total perturbation must be less than the budget, i.e. $||\mathbf{x}||_p\leq B$.
	\item \textbf{Interval Constraint: } Given a set of intervals $I = I_1 \times ... \times I_{\ell} \subset [0, 1]^{\ell}$, the attacker's new weight vector must fall within $I$, i.e. $w_k \in I_k$ for all $k\in[\ell]$.
\end{enumerate} 
For both of these constraints, we consider two paradigms for voters selecting their desired candidate. 
\begin{enumerate}
    \item  \textbf{Deterministic voting:} Voter $\mathbf{v}_j$ votes for candidate $\mathbf{c}^*_i \in \text{argmin}_{\mathbf{c}_i} d_{i,j}$, breaking ties according to candidate order $\mathbf{c}_1 \prec\dots\prec\mathbf{c}_m$.
    \item \textbf{Stochastic voting:} $\mathbb{P}(\mathbf{v}_j \text{ votes for } \mathbf{c}_i) = f(\mathbf{v}_j, \mathbf{c}_i)$, where $f : \pazocal{V} \times \pazocal{C} \rightarrow [0, 1]$ is a proper probability mapping and $\sum_{i}f(\mathbf{v}_j, \mathbf{c}_i) = 1$ for all $j\in [n]$.
\end{enumerate}
Both are common models translating voter-candidate distance into voting behavior~\cite{Advances-in-the-Spatial-Theory-of-Voting}.

When voters select candidates deterministically, we consider two objectives for the attacker. The first objective (\emph{Max Support}) is to maximize the total number of votes for their preferred candidate, $\mathbf{c}_1$:
\begin{align*}
    \text{maximize}_{\mathbf{x}} \big|\{\mathbf{v}_j \in \pazocal{V}: \mathbf{c}_1 \in \text{argmin}_{\mathbf{c}_i}d_{i, j} \}\big|
\end{align*}
The second objective (\emph{Majority Vote}) is to win the plurality vote: 
\begin{align*}
    \mathbf{c}_1 \in \text{argmax}_{\mathbf{c}_i\in \pazocal{C}}|\{\mathbf{v}_j \in \pazocal{V}: \mathbf{c}_i \in \text{argmin}_{\mathbf{c}_\ell}d_{\ell, j} \}|,
\end{align*} 
For stochastic voting, we consider the objective of maximizing the expected number of votes for $\mathbf{c}_1$:
\begin{align*}
    \text{maximize}_{\mathbf{x}} \sum_{j = 1}^n f(\mathbf{v}_j, \mathbf{c}_1).
\end{align*}

\section{DETERMINISTIC VOTING}
We begin by investigating our model of election control when voters always vote for their most preferred candidate.
For compactness of notation we introduce, for every candidate and voter pair, a preference vector which gives $\mathbf{v}_j$'s unweighted preference for $\mathbf{c}_1$ over $\mathbf{c}_i$ on each issue. Let $a_{j, k}^{(i)}  = \big| c_{i,k} - v_{j,k}\big|^p- \big| c_{1,k} - v_{j,k}\big|^p$. The preference vector for $\mathbf{v}_j$ of $\mathbf{c}_1$ over $\mathbf{c}_i$ is $\mathbf{a}_j^{(i)} = \big\langle a_{j, 1}^{(i)}, ..., a_{j, \ell}^{(i)} \big\rangle$.
The condition for $\mathbf{v}_j$ voting for $\mathbf{c}_1$ is then
\begin{align*}
\sum_{k=1}^{\ell} w_k a_{j, k}^{(i)}\geq 0\quad \text{ for all } i\in [m]\setminus\{1\}.
\end{align*}
In the case of only two candidates, we omit the index $i$ and denote the preference vector for $\mathbf{c}_1$ over $\mathbf{c}_2$ by $\mathbf{a}_j$.

In this section, we will show that both \emph{Max Support} and \emph{Majority Vote} are NP-hard, even when there are only two candidates, issues are binary, and the attacker has no constraints on their strength. If these assumptions are further restricted such that there are only $\Theta(\ell)$ voters, where $\mathbf{v}_k$ agrees with $\mathbf{c}_1$ on exactly $k$ issues, then  \emph{MaxSupport} is still NP-hard. 

Although both objectives are hard, even with several strong restrictions, we present sufficient conditions for the problem to become tractable. The following positive results hold for the normed budget constraint when $p \in \{1, 2, \infty\}$, and for the interval constraint with any $p\geq 1$.  Under the normed budget, with $p\notin\{1, 2, \infty\}$, results also hold, however, the computed $\mathbf{x}$ may break the attackers budget by some small $\epsilon > 0$, i.e. for both objectives we obtain, in polynomial time, a vector $\mathbf{x}$ that is guaranteed to have $||\mathbf{x}||_p \leq B + \epsilon$. 
The tractable cases are as follows: \emph{Max Support} has a polynomial time algorithm when either there are at most $\Theta(\log(\ell))$ voters, or the number of issues and the number of values each issues can take on are both constant, and
\emph{Majority Votes} has a polynomial time algorithm when either the number of voters is constant, or the number of issues and the number of values each issues can take on are both constant.


\subsection{HARDNESS OF CONTROL IN DETERMINISTIC SETTINGS}
First we show that even for two candidates and binary issues \emph{Majority Vote} (WTCP) is NP-complete, and that \emph{Max Support} (TCWMS) is NP-hard. Both of these result are the product of hardness reductions from the problem of election control by issue selection~\cite{bis:article}. We first give a formal definition of each problem used in the reductions.
\begin{definition}
Let $\pazocal{C}'$ be a set of $m'$ candidates and $\pazocal{V}'$ be a set of $n'$ voters. Both candidates and voters are vectors in $\{0, 1\}^\ell$, indicating positions on $[\ell']$ issues. An adversary selects a subset $S \subset [\ell']$ with $S \neq \emptyset$ for the objective of either determining if  $\mathbf{c}_1$ can win the plurality (TCIS), or maximizing the total number of votes $\mathbf{c}_1$ received (TCMS).
\end{definition}
TCIS was shown to be NP-complete and TCMS to be NP-hard by \citep{bis:article}.
\begin{theorem}\label{thm:MS-hardness}
For 2 candidates, $n$ voters, and $\ell$ binary issues, the problem of maximizing the number of votes for $\mathbf{c}_1$, TCWMS, is NP-hard, even when $B = \infty$ in the normed budget constraint, or when $I = [0, 1]^{\ell}$ in the interval constraint.
\end{theorem}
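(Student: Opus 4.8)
The plan is to reduce from the issue‑selection problem TCMS, which \citet{bis:article} prove NP‑hard and whose hard instances we may take to involve two candidates $\mathbf{c}_1,\mathbf{c}_2$ and binary issues. Given such an instance together with a target $t$, I would form the TCWMS instance with exactly the same two candidates, voters, and $\ell$ binary issues, with $B=\infty$ (resp. $I=[0,1]^{\ell}$), and ask whether $\mathbf{c}_1$ can receive at least $t$ votes. Two preliminary observations make the two models line up. First, since issues are binary, $|c_{i,k}-v_{j,k}|^p=|c_{i,k}-v_{j,k}|\in\{0,1\}$, so the coefficients $a_{j,k}$ are exactly the quantities that govern the Hamming comparison $d_S(\mathbf{c}_1,\mathbf{v}_j)\le d_S(\mathbf{c}_2,\mathbf{v}_j)$ used in TCMS, and they do not depend on $p$. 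Second, if $\mathbf{c}_1$ and $\mathbf{c}_2$ agree on some issue $k_0$, then selecting $S=\{k_0\}$ (resp. placing all weight on $k_0$) makes every voter equidistant and hence a vote for $\mathbf{c}_1$, so both instances are trivially yes‑instances; we may therefore assume the candidates disagree on every issue, in which case every $a_{j,k}\in\{-1,+1\}$.

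The easy direction is that a TCMS solution yields a TCWMS solution of the same quality: if a nonempty $S\subseteq[\ell]$ gives $\mathbf{c}_1$ at least $t$ votes in TCMS, take $\mathbf{w}'=\mathbf{1}_S/|S|$, which is a feasible weight vector (nonnegative, unit $\ell_1$‑norm, and the budget/interval constraints are vacuous). Each voting condition $\sum_k w'_k a_{j,k}\ge 0$ is positively homogeneous in $\mathbf{w}'$, so it holds for a voter under $\mathbf{w}'$ exactly when $\sum_{k\in S}a_{j,k}\ge 0$, i.e.\ exactly when that voter votes for $\mathbf{c}_1$ under $S$; hence $\mathbf{c}_1$ again receives at least $t$ votes.

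The converse — the crux — is that real‑valued weights buy the attacker nothing beyond choosing a subset. Suppose a feasible $\mathbf{w}'$ makes a set $T$ of voters, $|T|\ge t$, vote for $\mathbf{c}_1$. The set of weight vectors doing so for all of $T$ is the nonempty polytope $Q=\{\mathbf{w}:\mathbf{w}\ge 0,\ \sum_k w_k=1,\ \sum_k a_{j,k}w_k\ge 0\ \forall j\in T\}$; let $\mathbf{w}^{\star}$ be a vertex of $Q$ and $S^{\star}=\operatorname{supp}(\mathbf{w}^{\star})$. The claim is that $\mathbf{w}^{\star}=\mathbf{1}_{S^{\star}}/|S^{\star}|$. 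Granting this, $S^{\star}$ is a nonempty issue set, and by homogeneity every voter in $T$ votes for $\mathbf{c}_1$ under issue selection $S^{\star}$, so TCMS has value at least $t$, completing the reduction. To prove the claim, restrict attention to the coordinates in $S^{\star}$, where $\mathbf{w}^{\star}$ is strictly positive; being a vertex, it is cut out by $|S^{\star}|-1$ linearly independent tight equalities $\sum_k a_{j,k}w_k=0$ together with $\sum_k w_k=1$. Writing $P_j=\{k\in S^{\star}:a_{j,k}=1\}$, each tight constraint reads $\mathbf{w}^{\star}(P_j)=\tfrac12$ (equivalently $\mathbf{w}^{\star}(S^{\star}\setminus P_j)=\tfrac12$). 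Strict positivity forces the family $\{P_j\}$ of these half‑weight sets to be an antichain (a strict containment $P_i\subsetneq P_j$ would make $\mathbf{w}^{\star}$ vanish on $P_j\setminus P_i$), it is closed under complementation in $S^{\star}$, and — since $\mathbf{1}_{P_j}=\tfrac12(\sigma_j+\mathbf{1})$ for the $\pm1$ constraint gradient $\sigma_j$ — the vectors $\{\mathbf{1}_{P_j}\}$ together with $\mathbf{1}$ span $\mathbb{R}^{S^{\star}}$. A combinatorial argument then shows that such a complement‑closed, full‑dimensional antichain of half‑weight sets pins $\mathbf{w}^{\star}$ to be constant on $S^{\star}$.

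I expect this last step — that a complement‑closed, full‑dimensional antichain of half‑weight subsets forces the weights to be uniform — to be the main obstacle; the rest is routine bookkeeping. A fallback, should a clean self‑contained proof of that step be awkward, is to instead bound the denominator of the rational vertex $\mathbf{w}^{\star}$ and argue directly that an optimal attack can be taken $\{0,1/|S^{\star}|\}$‑valued, or to route the $\pm1$ structure through an explicit Farkas/duality certificate. Finally, the very same construction, now invoking NP‑completeness of the issue‑selection decision problem TCIS, yields the analogous hardness for the Majority Vote objective.
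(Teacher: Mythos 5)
Your reduction maps a TCMS instance to the \emph{identical} TCWMS instance and rests entirely on the claim that real-valued weights buy nothing beyond subset selection --- concretely, that a vertex of the polytope $Q$ of weight vectors winning a voter set $T$ is uniform on its support. That claim is false, and with it the whole converse direction of your reduction fails. Take $\ell=6$, $\mathbf{c}_1=\mathbf{1}$, $\mathbf{c}_2=\mathbf{0}$, and identify each voter with the set $P\subseteq[6]$ of issues on which it agrees with $\mathbf{c}_1$. Use the $14$ voters $P\in\{\{1,2\},\{3,4,5,6\}\}\cup\{\{i\}\cup\{j,k\}: i\in\{1,2\},\ \{j,k\}\subseteq\{3,4,5,6\}\}$. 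The weight vector $\mathbf{w}=(\tfrac14,\tfrac14,\tfrac18,\tfrac18,\tfrac18,\tfrac18)$ makes every one of these voters exactly indifferent, so with ties broken toward $\mathbf{c}_1$ it wins all $14$; moreover the $14$ tight constraints have rank $6$, so this $\mathbf{w}$ is the unique point of $Q$ (hence a vertex) and is not uniform on its support --- refuting the lemma, including your fallbacks via rational-vertex denominators or Farkas certificates, since the statement itself is false rather than merely hard to prove. Worse, \emph{no} subset $S$ wins all $14$: winning both $\{1,2\}$ and $\{3,4,5,6\}$ forces $|S\cap\{1,2\}|=|S\cap\{3,4,5,6\}|=|S|/2$, so $|S|\in\{2,4\}$; if $S=\{i,j\}$ with $i\in\{1,2\}$, $j\in\{3,4,5,6\}$, the voter $\{i'\}\cup\{j',j''\}$ with $i'\neq i$ and $j',j''\notin\{j\}$ intersects $S$ in $0<1$ elements; if $S=\{1,2,j,k\}$, the voter $\{1\}\cup(\{3,4,5,6\}\setminus\{j,k\})$ intersects $S$ in $1<2$ elements. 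So the TCWMS optimum is $14$ while the TCMS optimum on the same data is at most $13$, and the identity map is not answer-preserving.

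This is precisely the difficulty the paper's proof is built around: it also reduces from TCMS, but it does \emph{not} reuse the instance verbatim. It doubles the issue space and pads the election with five families of heavily replicated gadget voters ($\pazocal{V}_2$ through $\pazocal{V}_6$) whose only role is to force every optimal weight vector to be two-valued, i.e.\ $w_k\in\{0,c\}$ for a common $c>0$, after which the surviving degrees of freedom encode the original subset-selection problem. In other words, the property you hoped to extract for free from vertex structure has to be engineered into the instance. Your ``easy direction'' (homogeneity of the voting condition, so $\mathbf{1}_S/|S|$ simulates $S$) is fine and is also implicitly used by the paper, but to repair your proof you would need to add gadgetry of this kind rather than argue about vertices of $Q$.
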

\begin{proof}
    To prove this claim we will reduce from the problem of 0-1 issue selection on binary issues with two candidates (TCMS).
    An instance of TCMS is defined by a set of $n'$ voters $\pazocal{V}' = \{\mathbf{v}'_1, ..., \mathbf{v}_{n'}'\}$ and two candidates $\mathbf{c}_1', \mathbf{c}_2'$ all of which select positions on $\ell '$ binary issues.
    The objective of TCMS is to maximize the number of votes for $\mathbf{c}_1'$ subject to the constraint that $\mathbf{w}' \in \{0, 1\}^{\ell'}$ and $||\mathbf{w}'||_1 > 0$.
    To reduce from a given instance of TCMS we will add a set of voters that forces any optimal solution to have $\mathbf{w} \in \{0, c\}^\ell$ for some constant $c$ that can be associated with $1$ in the instance of TCMS. Since $B=\infty$, we may assume the adversary is selecting the weight vector $\mathbf{w}$ rather than a perturbation $\mathbf{x}$.
    Without loss of generality we may assume $\mathbf{c}'_1 = \langle 1, 1, ..., 1 \rangle$ and $\mathbf{c'}_2 = \langle 0, 0, ..., 0\rangle$. 
    
    First, let $\ell = 2\ell ' + 2$ and $\mathbf{c}_1 = \langle 1, 1, ..., 1 \rangle, \mathbf{c}_2 = \langle 0, 0, ..., 0 \rangle$. 
    To encode $\pazocal{V}'$, let $\pazocal{V}_1$ be a set of voters obtained by mapping each $\mathbf{v}'_j\in \pazocal{V}'$ to a voter  $\mathbf{v}_j$ where $v_{j, k} = v_{j, k + \ell' + 1} = v'_{j, k}$ for $k \in [\ell ']$ and $v_{j, \ell ' + 1} = 1$, $v_{j, 2\ell ' + 2} = 0$. Compactly, each voter can be represented as $\mathbf{v}_j = \langle v'_{j, 1}, ..., v'_{j, \ell '}, 1, v_{j, 1}, ..., v_{j, \ell '}, 0 \rangle$.
    Next we will introduce five more sets of voters which will force any optimal $\mathbf{w}$ to be binary.
    
    For each $r\in [\ell ' + 1]$ construct $8n'\ell'$ identical copies of a voter who has $v_{j, k} = 1$ when $(k + r) \text{ mod } \ell \leq \ell ' + 1$, and $v_{j, k} = 0$ otherwise. Denote this set of voters as $\pazocal{V}_2$. Let $\pazocal{V}_3$ be a set of voters obtained by taking each $\mathbf{v}_j \in \pazocal{V}_2$ and flipping their opinion, i.e. for each $\mathbf{v}_{j_1}\in \pazocal{V}_2$ add $\mathbf{v}_{j_2}$ to $\pazocal{V}_3$ where $v_{j_2, k} = 1 - v_{j_1, k}$ for all $k\in [\ell]$. Note that $|\pazocal{V}_2| = |\pazocal{V}_3| = 8n'\ell '(\ell ' + 1)$.
    
    Now let $\pazocal{V}_4$ be the set of $4n'\ell '$ voters such that each $\mathbf{v}_j\in\pazocal{V}_4$ has $v_{j, \ell ' + 1} = v_{j, 2\ell ' + 2} = 1$ and $v_{j, k} = 0$ for all other $k$. 
    
    For each $r \in [\ell ']$ we create $2n'$ voters of the form $v_{j, r} = v_{j, r +  \ell ' + 1} = 1$, and for each $k \neq r$, $v_{j, k} =1$,  $v_{j, k + \ell + 1} = 0$. Call this set of $2n'\ell '$ voter $\pazocal{V}_5$.
    Lastly, for each $r \in [\ell ']$ create $2n'$ voters with $v_{j, k} = 1$, $v_{j, k + \ell' + 1} = 0$ for $k \neq r, l + 1$, and $v_{j, r} = v_{j, r + \ell + 1} = 1$, $v_{j, \ell ' + 1} = v_{j, 2\ell ' + 2} = 0$. Call this set of $2n'\ell '$ voters $\pazocal{V}_6$. 
    Let $\pazocal{V} = \pazocal{V}_1 \cup \pazocal{V}_2 \cup \pazocal{V}_3 \cup \pazocal{V}_4 \cup \pazocal{V}_5 \cup \pazocal{V}_6$. 
    
    Note that all voters outside of $\pazocal{V}_1$ have at least $2n'$ copies of themselves. 
    Therefore no optimal solution will have a voter $\mathbf{v}_{j_1} \in \pazocal{V}_1$ vote for $\mathbf{c}_1$ if doing so meant losing any voter $\mathbf{v}_{j_2} \notin \pazocal{V}_1$.
    As a result we will first examine criteria of optimal solutions over $\pazocal{V}_1^c$.
    
    Note that the preference vector of $\mathbf{v}_j$ has $a_{j, k} = 1$ if $v_{j, k} = c_{1, k}$ and $a_{j, k} = -1$ if $v_{j, k} = c_{2, k}$.
    
    Consider the voters in $\pazocal{V}_2$, each of which was created according to some $r \in [\ell ' + 1]$. For each $\mathbf{v}_j \in \pazocal{V}_2$, we have that
    \begin{align*}
        \langle \mathbf{a}_j, \mathbf{w} \rangle \geq 0 \iff \sum_{k_1 \in I_r} w_{k_1} - \sum_{k_2 \in [\ell]\setminus I_r}w_{k_2} \geq 0 
    \end{align*}
    where $I_r = \{k\in[\ell] : ( k + r) \text{ mod } \ell \leq \ell ' + 1\}$.
    Similarly, for each voter $\mathbf{v}_j \in \pazocal{V}_3$ we have 
    \begin{align*}
        \langle \mathbf{a}_j, \mathbf{w} \rangle \geq 0 \iff \sum_{k_1 \in I_r} w_{k_1} - \sum_{k_2 \in [\ell]\setminus I_r}w_{k_2} \leq 0 
    \end{align*}
    
    Therefore, all $16n'(\ell ' + 1)$ voters can be made to vote for $\mathbf{c}_1$ if 
    \begin{align*}
            \sum_{k_1 \in I_r} w_{k_1} = \sum_{k_2 \in [\ell]\setminus I_r}w_{k_2} \quad \forall r \in[\ell ' + 1]
    \end{align*}
    The above system of linear equations has a unique solution, namely $w_{k} = w_{k + \ell' + 1}$ for all $k$. For any $\mathbf{v}_j \in \pazocal{V}_2 \cup \pazocal{V}_3$, there are strictly more copies of $\mathbf{v}_j$ than there are total voters in all other voter sets combined. 
    Therefore any optimal solution must have all voters in $\pazocal{V}_2 \cup \pazocal{V}_3$ voting for $\mathbf{c}_1$. As a result we will work under the assumption that $w_{k} = w_{k + \ell' + 1}$ for all $k$.
    
    All voters in $\pazocal{V}_4$ are of the form $v_{j, r} = v_{j, r + \ell' + 1} = 0$ for some $r \in [\ell ']$, $v_{j, \ell ' + 1} = v_{j, 2\ell ' + 1} = 1$, and $v_{j, k} = 1, v_{j, k + \ell' + 1} = 0$ for all $k \neq r, \ell ' + 1$. If $\pazocal{V}_4$ is made to vote for $\mathbf{c}_1$ then
    \begin{align*}
        w_{\ell ' + 1} + w_{2\ell ' + 1} \geq \max_{k}\{w_k + w_{k + \ell' + 1} : k \in [\ell ']\},
    \end{align*}
    which would immediately imply that 
    \begin{align}\label{cond:1}
        0 < w_{\ell ' + 1} = w_{2\ell ' + 2} \geq \max_{k}\{w_k: k\in[\ell']\}.
    \end{align}
    Since there are more copies of each voter in $\pazocal{V}_4$ than there are total remaining voters and since every voter in $\pazocal{V}_2 \cup \pazocal{V}_3 \cup \pazocal{V}_4$ can be made to vote for $\mathbf{c}_1$, no optimal solution would have any of these voters vote for $\mathbf{c}_2$, and Equation \ref{cond:1} holds.
    
    Finally, consider the voters in $\pazocal{V}_5$ and $\pazocal{V}_6$. Each voter in $\pazocal{V}_5$ is of the form $v_{j, r} = v_{j, r + \ell' + 1} = 0$ for some $r\in [\ell ']$, $v_{j, k} = 1$, $v_{j, k + \ell' + 1} = 0$ for all $k\neq r$. Each voter in $\pazocal{V}_6$ is of the form $v_{j, r} = v_{j, r + \ell' + 1} = 1$ for some $r \in [\ell ']$, $v_{j, \ell ' + 1} = v_{j, 2\ell ' + 2} = 0$, and $v_{j, k} = 1$, $v_{j, k + \ell' + 1} = 0$ for all $k\neq r, \ell ' + 1$. 
    Note that for each $r$ there are $2n'$ copies of the corresponding voter in $\pazocal{V}_5$ and of the corresponding voter in $\pazocal{V}_6$, and that for each $r$ either the set of voters in $\pazocal{V}_5$ vote for $\mathbf{c}_1$ or the voters in $\pazocal{V}_6$ vote for $\mathbf{c}_1$. To see this, fix any $r$ and consider the voters in either set. If the voter from $\pazocal{V}_5$ votes for $\mathbf{c}_1$ then it must be the case that
    \begin{align*}
        \sum_{k \neq \ell ' + 1, 2\ell ' + 2} w_{k}a_{j, k} &\geq w_{\ell ' + 1} + w_{2\ell ' + 2} \\
        \iff w_{r} = w_{r + \ell' + 1}&= w_{\ell ' + 1} = w_{2\ell ' + 2}.
    \end{align*}
    Alternatively, if the voter is in $\pazocal{V}_6$, then 
    \begin{align*}
        &-w_r + w_{r + \ell' + 1} + \sum_{k \neq r}w_k - w_{k + \ell' + 1} \geq 0 \\
        \iff& -w_r - w_{r + \ell' + 1} \geq 0.
    \end{align*}
    Both of the constraints cannot hold since $w_{\ell ' + 1} = 0$ would imply that $\max_{k}\{w_k, k\in[\ell]\} = 0$ and $\sum_{k}^{\ell}w_k = 0$.
    Again, there are $2n'$ copies of each voter in $\pazocal{V}_5, \pazocal{V}_6$ and there remain only $n'$ voters left, so it must be the case that any optimal solution gains either the voter in  $\pazocal{V}_5$ or in $\pazocal{V}_6$ for each $r$.
    Therefore, any optimal solution must have $w_k \in\{0, w_{\ell ' + 1}\}$ with $w_{\ell ' + 1} > 0$. 
    
    Thus, as the only voters left to sway are those in $\pazocal{V}_1$, which corresponds to $\pazocal{V}$, it must be the case that there is a maximum of $8\ell'^2 n' + 13 \ell' n' + \alpha$ voters if and only if an optimal solution in the given instance of TCMS attains $\alpha$ voters. 
\end{proof}

 When there are only two candidates, the problem of winning the plurality vote becomes a special case of maximizing the number of votes for $\mathbf{c}_1$. The proof of Theorem \ref{thm:MS-hardness} can be easily extended to the problem of winning the plurality, by adding a set of voters who agree with $\mathbf{c}_2$ on all issues, such that this set ``cancels out" any votes for $\mathbf{c}_1$ from the constructed voters. These new voters can clearly not be won over by any nonzero weight vector.
 This yields the following theorem.
 
\begin{theorem}\label{cor:plur-hard}
For 2 or more candidates, $n$ voters, and $\ell$ binary issues, the problem of determining if $\mathbf{c}_1$ can win the plurality, TCWP, is NP-complete, even when $B = \infty$ in the normed budget constraint or when $I = [0, 1]^{\ell}$ in the interval constraint..
\end{theorem}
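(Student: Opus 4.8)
The plan is to prove NP-membership and NP-hardness separately. NP-hardness will be a modest extension of the reduction behind Theorem~\ref{thm:MS-hardness}; NP-membership needs a short argument that, although $\mathbf{w}'$ ranges over a continuum, a winning manipulation admits a polynomial-size certificate.

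\textbf{Membership.} The election outcome depends on $\mathbf{w}'$ only through the induced vote profile, so a certificate is a map $\sigma:[n]\to[m]$ naming the candidate each voter is claimed to support. Given $\sigma$, the requirement ``$\mathbf{v}_j$ votes for $\mathbf{c}_{\sigma(j)}$'' unfolds, via $\langle \mathbf{a}_j^{(i)},\mathbf{w}'\rangle = d_{i,j}^p-d_{1,j}^p$ (and the analogous differences for arbitrary candidate pairs) together with the fixed tie-breaking order $\mathbf{c}_1\prec\cdots\prec\mathbf{c}_m$, into a system of weak and strict linear inequalities in $\mathbf{w}'$; to these we adjoin $\mathbf{w}'\ge 0$, $\|\mathbf{w}'\|_1=1$, and the attacker's constraint ($\|\mathbf{x}\|_p\le B$ or $\mathbf{w}'\in I$, which are vacuous in the regimes of the theorem). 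For binary issues every coefficient $a_{j,k}^{(i)}$ lies in $\{-1,0,1\}$, so the system has polynomially bounded integer data. Feasibility of such a mixed system is decidable in polynomial time: introduce a scalar $t\in[0,1]$, replace each strict inequality $g(\mathbf{w}')<0$ by $g(\mathbf{w}')+t\le 0$, and maximize $t$; the original system is feasible iff the optimum is strictly positive, and this LP has a polynomial-size optimal point. The verifier accepts iff the LP test succeeds and $|\sigma^{-1}(1)|\ge|\sigma^{-1}(i)|$ for all $i$. Hence TCWP $\in$ NP.

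\textbf{Hardness.} We reduce from TCIS, which is NP-complete. Since a TCIS instance has the same format as a TCMS instance, apply the gadget construction from the proof of Theorem~\ref{thm:MS-hardness} verbatim to a TCIS instance $(\pazocal{V}',\{\mathbf{c}_1',\mathbf{c}_2'\},[\ell'])$, obtaining a two-candidate, binary-issue voter set $\pazocal{V}$ over $\ell=2\ell'+2$ issues such that $\max_{\mathbf{w}'}|\{\mathbf{v}_j\in\pazocal{V}:\mathbf{c}_1\in\text{argmin}_{\mathbf{c}_i} d_{i,j}\}| = K+\alpha^*$, where $K$ is the polynomial-time-computable ``forcing'' constant from that proof and $\alpha^* := \max_{\emptyset\ne S\subseteq[\ell']}|\{j:\mathbf{v}_j' \text{ votes for } \mathbf{c}_1' \text{ under issue set } S\}|$ is precisely the quantity governing TCIS, so that the TCIS instance is a yes-instance iff $\alpha^*\ge\lceil n'/2\rceil$. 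Now augment $\pazocal{V}$ with a set $\pazocal{D}$ of $D:=2(K+\lceil n'/2\rceil)-|\pazocal{V}|$ copies of the voter at position $\langle 0,\dots,0\rangle$; one checks from the cardinalities of $\pazocal{V}_1,\dots,\pazocal{V}_6$ that $D$ is a positive integer of polynomial size. For every admissible $\mathbf{w}'$ such a voter has $\langle \mathbf{a}_j,\mathbf{w}'\rangle=-\|\mathbf{w}'\|_1=-1<0$, hence always votes for $\mathbf{c}_2$; in particular adding $\pazocal{D}$ changes neither the maximizing $\mathbf{w}'$ nor the value $K+\alpha^*$.

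\textbf{Correctness and the main obstacle.} The augmented instance has $N:=|\pazocal{V}|+D=2(K+\lceil n'/2\rceil)$ voters, so with two candidates $\mathbf{c}_1$ can be made to win the plurality iff it can be made to receive at least $\lceil N/2\rceil=K+\lceil n'/2\rceil$ votes, i.e.\ iff $K+\alpha^*\ge K+\lceil n'/2\rceil$, i.e.\ iff the TCIS instance is a yes-instance; the reduction is clearly polynomial, and since it uses the weight vector directly it is insensitive to whether $B=\infty$ or $I=[0,1]^\ell$. As the two-candidate case is a special case of $m\ge 2$, TCWP is NP-hard, and with the membership argument above it is NP-complete. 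The hardness direction is essentially bookkeeping on top of Theorem~\ref{thm:MS-hardness}; the step needing the most care is membership — specifically, verifying that feasibility of the mixed weak/strict linear system induced by a guessed vote profile, with the correct handling of tie-breaking among $\mathbf{c}_2,\dots,\mathbf{c}_m$, is polynomial-time decidable and therefore yields a legitimate NP verifier.
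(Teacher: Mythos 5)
Your hardness reduction is essentially the paper's own: apply the Theorem~\ref{thm:MS-hardness} gadget to a TCIS instance and pad with always-$\mathbf{c}_2$ voters at $\langle 0,\dots,0\rangle$ so that the gadget voters' votes cancel and the plurality is decided entirely by the encoded TCIS voters --- your count $D$ agrees with the paper's padding set of size $8\ell'^2n'+12\ell'n'$. Your explicit NP-membership argument (guess the vote profile, decide the induced mixed weak/strict linear system by maximizing a slack variable) is correct and is a worthwhile addition, since the paper claims NP-completeness but never argues membership.
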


Next, we proceed to considerably strengthen the hardness result in Theorem~\ref{thm:MS-hardness}.
When there are only two candidates and issues are binary, a partial order can be induced on voters by the number of issues they agree with $\mathbf{c}_1$ on. That is, the set of voters can be partitioned into $\ell$ tiers $S_r = \{\mathbf{v}_j\in\pazocal{V}:\sum_{k=1}^{\ell}a_{j, k} = 2r - \ell\}$, where $S_r$ is the set of voters who agree with $\mathbf{c}_1$ on exactly $r$ issues.
We now show that even if for all $r\in[\ell]$ there are only a constant number of voters who agree with $\mathbf{c}_1$ on exactly $r$ issues, maximizing the number of votes for $\mathbf{c}_1$ is still NP-hard.

\begin{theorem}\label{thm:l-voters}
Suppose there are only two candidates, $\ell$ binary issues, and $\Theta(\ell)$ voters.
Suppose further that for each $r\in[\ell]$ $\big|\{ \mathbf{v}_j\in \pazocal{V} : \sum_{k = 1}^{\ell} a_{j, k} = 2r - \ell\big\}|\in \Theta(1)$. Then maximizing the number of voters for $\mathbf{c}_1$ is NP-hard, even when $B = \infty$ in the normed budget constraint, or when $I = [0, 1]^{\ell}$ in the interval constraint.
\end{theorem}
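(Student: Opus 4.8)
The plan is to reduce from a number‑partition problem — concretely \textsc{Subset‑Sum} with items $s_1,\dots,s_q$ and target $T$ (using a strongly $\mathsf{NP}$‑hard variant such as $3$‑\textsc{Partition} if one prefers a unary encoding of the $s_i$, so the construction stays polynomial). We keep the two candidates of Theorem~\ref{thm:MS-hardness}, $\mathbf{c}_1=\langle 1,\dots,1\rangle$ and $\mathbf{c}_2=\langle 0,\dots,0\rangle$, so each $a_{j,k}\in\{-1,+1\}$ and, because the weights sum to $1$, voter $\mathbf{v}_j$ supports $\mathbf{c}_1$ exactly when the total weight on the issues where $\mathbf{v}_j$ agrees with $\mathbf{c}_1$ is at least $\tfrac12$; since $B=\infty$ (resp. $I=[0,1]^\ell$) the attacker may choose any $\mathbf{w}$ in the simplex. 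The target correspondence is that the issues are grouped so that a (near‑)optimal $\mathbf{w}$ must spread its weight over a collection of issue‑blocks indexed by a subset $S\subseteq\{1,\dots,q\}$, in proportions tied to the $s_i$, and a designated extra voter can be won precisely when $S$ satisfies $\sum_{i\in S}s_i=T$.

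The construction has two parts. First I would build $\Theta(\ell)$ \emph{structural} voters, \emph{with only a constant number in each agreement tier}, whose role is to guarantee that the support of $\mathbf{c}_1$ is maximized only by weight vectors of the intended shape. The crux is that, unlike in Theorem~\ref{thm:MS-hardness}, I cannot pin $\mathbf{w}$ down by taking polynomially many identical copies of a ``balance'' voter, since all such copies would sit in one tier and violate the hypothesis; instead each balancing/normalization constraint must be carried by a single voter (or constantly many), and these voters are pushed into pairwise distinct tiers by padding their agreement sets with private ``filler'' issues, while a further block of structural voters forces the filler issues to carry (near‑)zero weight so the padding is inert. Second I would add a constant number of \emph{target} voters whose agreement sets are chosen so that, once $\mathbf{w}$ has the intended shape, exactly one more voter can be won iff $\sum_{i\in S}s_i=T$. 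Correctness is then routine: every (near‑)optimal $\mathbf{w}$ satisfies all structural voters and hence encodes a subset $S$, and the attacker's vote total reaches a threshold $K^\star$ iff $\sum_{i\in S}s_i=T$, so $\mathbf{c}_1$ can be pushed to $K^\star$ votes iff the source instance is a yes‑instance; by construction $\ell=\Theta(q)$ (aside from filler issues), there are $\Theta(\ell)$ voters, and every tier holds $\Theta(1)$ of them.

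The main obstacle is the first part: realizing the structural voters under the constant‑per‑tier restriction. In Theorem~\ref{thm:MS-hardness} the weight vector was forced by replicating a few balance voters enough times that an optimal solution would never sacrifice even one of them; that device is unavailable here, so I must instead argue that $\Theta(\ell)$ voters laid out essentially one per tier — together with the filler‑issue voters — still collectively force $\mathbf{w}$ into the one‑block‑per‑selected‑item shape, and, more delicately, that no alternative weight vector can win strictly more voters of the whole instance than the honest ones. Carrying out this bookkeeping — distributing the constraints across tiers, keeping the issue count polynomial, and verifying these two ``no‑cheating'' properties — is where essentially all the effort lies.
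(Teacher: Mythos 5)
Your proposal correctly identifies the central device that makes this theorem work: pad each voter with a private block of ``filler'' issues, of a different size for each voter, so that the voters are spread across distinct agreement tiers, and then argue that any good weight vector must put zero weight on the filler issues so that the padding is inert. However, as written the proposal has a genuine gap, and the source problem you chose makes that gap much harder to close than it needs to be.

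The gap is the one you name yourself: with only $\Theta(1)$ voters per tier you cannot replicate balance voters, so you must force $\mathbf{w}$ into the intended shape using $\Theta(\ell)$ essentially distinct voters, and you must verify that no ``cheating'' weight vector wins strictly more voters overall. You defer this entirely (``carrying out this bookkeeping \dots is where essentially all the effort lies''), but that bookkeeping \emph{is} the proof. Worse, a reduction from Subset-Sum requires the structural voters to pin the weight on block $i$ to be proportional to $s_i$ --- a quantitative constraint of exactly the kind that Theorem~\ref{thm:MS-hardness} enforces by massive replication of balance voters, the one tool the tier hypothesis forbids here --- and you give no replacement mechanism for it. The paper sidesteps all of this by reducing from TCWMS itself, the problem already shown NP-hard in Theorem~\ref{thm:MS-hardness}, rather than from a number problem: it sets $\ell = n'^2\ell'^2$, copies each original voter's positions on the first $\ell'$ issues verbatim, and appends to the $j$-th voter a private window of $j+1$ agreeing filler issues (the windows indexed by $[\frac{j^2+j}{2},\frac{j^2+3j}{2}]$ are pairwise disjoint, which simultaneously separates the tiers and keeps $\ell$ polynomial). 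Because each voter agrees with $\mathbf{c}_1$ on its own window but disagrees on every other voter's window, combining any three of the resulting preference inequalities forces all window weights to zero, so the padding is inert and the optimization collapses to the original TCWMS instance on the first $\ell'$ coordinates. No new weight-forcing gadget is needed; the hardness is inherited wholesale. I would restructure your reduction along these lines: the tier-separation and inert-padding idea you have is the right one, but it should be grafted onto the already-hard instance from Theorem~\ref{thm:MS-hardness} rather than used as scaffolding for a from-scratch Subset-Sum gadget whose key step is missing.
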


\begin{proof}
    To prove this claim, we will reduce from TCWMS, which was shown to be NP-hard in Theorem~\ref{thm:MS-hardness}. 
    An instance of TCWMS is defined by two candidates $\mathbf{c}_1', \mathbf{c}_2'$, a voter set $\pazocal{V}'$, and a set of $\ell'$ issues taking on values $0, 1$.
    
    In the constructed instance of our problem let $\ell = n'^2\ell'^2$ and assume w.l.o.g.~that $\mathbf{c}_1 = \langle 1, 1, ..., 1 \rangle$, $\mathbf{c}_2 = \langle 0, 0, ..., 0 \rangle$. 
    We will construct a set of voters $\pazocal{V}$ that encodes the voters in $\pazocal{V}'$ such that the election only depends on issues $[\ell']$. 
    To do this, first decompose the set of voters into disjoint sets $\pazocal{V}' = S_1 \cup S_2 \cup ...\cup S_{\ell'}$, such that $S_r = \{\mathbf{v}' \in \pazocal{V}: \sum_{k' = 1}^{\ell'} a'_{k'} = 2r - \ell'\}$. 
    Starting at $r=1$, iterate through each $\mathbf{v'}_j \in S_r$ and create a voter $\mathbf{v}_j$ such that $v_{j, k} = v'_{j, k}$ for all $k\in[\ell']$, $v_{j, k + \ell'} = 1$ for all $k\in [\frac{j^2 + j}{2}: \frac{j^2 + 3j}{2}]$ and $v_{j, k + \ell'} = 0$ otherwise. 
    Under this construction, for any index $s \in[\ell' + 1 : n'^2\ell'^2]$, there is only a single $j$ for which $v_{j, s} = 1$. That is, each voter has either $\sum_{s} w_s$ or $-\sum_{s}w_s$, for $s\in[\ell' +\frac{j^2 + j}{2}: \ell' + \frac{j^2 + 3j}{2}]$, as terms in $\langle \mathbf{w}, \mathbf{a}_j\rangle$. Therefore, each of these index sets $[\ell' +\frac{j^2 + j}{2}: \ell' + \frac{j^2 + 3j}{2}]$ can be associated with a single index, say $j^*$, where for all $j\neq j^*$, $\mathbf{a}_{j, j^*} = -1$ and $\mathbf{a}_{j^*, j^*} = 1$. Under this simplified version of indices, we see that for $\mathbf{v}_j$, and for $k\in[n']$ that the contribution from all $k^*$ issues, to the voters preference sum is, $-w_{1^*} - ... -w_{j^* -1} + w_{j^*} - w_{j^* - 1} - ... -w_{{n'}^*}$. If we take any three of these sums as linear inequalities for voters $\mathbf{v}_{j_1}, \mathbf{v}_{j_2}, \mathbf{v}_{j_3}$, we get 
    \begin{align*}
        -w_{1^*} - ... -w_{{j_1}^* -1} + w_{{j_1}^*} - w_{{j_1}* - 1} - ... -w_{{n'}^*} \geq 0\\
        -w_{1^*} - ... -w_{{j_2}^* -1} + w_{{j_2}^*} - w_{{j_2}^* - 1} - ... -w_{{n'}^*} \geq 0\\
        -w_{1^*} - ... -w_{{j_3}^* -1} + w_{{j_3}^*} - w_{{j_3}^* - 1} - ... -w_{{n'}^*} \geq 0
    \end{align*}
    Since each $0 \leq w_{k^*} \leq 1$, the only satisfying assignment to these three inequalities is $w_{k^*} = 0$ for all $k^*$ . Therefore the objectives of both problems align and this restricted version of WTCMS is NP-hard.
\end{proof}


\subsection{TRACTABLE SPECIAL CASES}
We now return to the setting when there are $m$ candidates, $n$ voters, and issues are real-valued.
Although both \emph{Max Support} and \emph{Majority Vote} are NP-hard even with several strong restrictions, we now show sufficient conditions for either objective to be computed efficiently, as well as algorithms to do so.

Under the normed budget constraint when $p\in\{1, 2, \infty\}$, or under the interval constraint when $p\geq 1$, if the number of voters is $\Theta(\log(\ell))$, or the number of issues is constant and issue values are from a set of constant size, then \emph{Max Support} can be computed in polynomial time. 
For the normed budget constraint with $p\notin\{1, 2, \infty\}$, if the maximum number of votes for $\mathbf{c}_1$ is $\alpha$ when $||\mathbf{x}||_p \leq B$, then for $\epsilon > 0$ a perturbation $\mathbf{x}'$ where $||\mathbf{x}'||_p \leq B + \epsilon$ and $\mathbf{c}_1$ obtains $\alpha' \geq \alpha$ votes, can be found in polynomial time with respect to the input size and $\log\big(\frac{1}{\epsilon}\big)$. Moreover, as $\epsilon \rightarrow 0$, $\alpha'$ asymptotically approaches $\alpha$. 


Under similar assumptions on the number of voters or issues, the objective of \emph{Majority Vote} can be computed in polynomial time for the normed budget constraint with $p\in\{1, 2, \infty\}$, or for the interval constraint with $p\geq 1$. 
In the case of $p\notin\{1, 2, \infty\}$, suppose that $\mathbf{x}^* = \text{argmax}_{\mathbf{x}}\{ ||\mathbf{x}||_p : \mathbf{c_1} \text{ wins the election}\}$, assuming $\mathbf{c}_1$ can be made to win the election. 
For $\epsilon > 0$ a perturbation $\mathbf{x}'$ with $\mathbf{c}_1$ winning the election and $||\mathbf{x}'||_p - ||\mathbf{x}^*|| \leq \epsilon$, can be found in polynomial time with respect to the problem size and $\log\big(\frac{1}{\epsilon}\big)$. 
This $\mathbf{x}'$ might break the attacker's budget by at most $\epsilon$. If $||\mathbf{x}'||_p \leq B$ then, simply taking $\mathbf{x} = \mathbf{x}'$ wins the election within the budget constraint. 
However, in the case when $B < ||\mathbf{x}'||_p \leq B + \epsilon$, it will be unknown whether there exists a $\mathbf{x}$ with $||\mathbf{x}||_p \leq B$ such that $\mathbf{c}_1$ wins the election.
If the attacker is allowed to break their budget by $\epsilon$, i.e. $||\mathbf{x}||_p \leq B + \epsilon$ then taking $\mathbf{x} = \mathbf{x}'$ wins $\mathbf{c}_1$ the election.Further, as $\epsilon \rightarrow 0$, $\mathbf{x}'$ asymptotically approaches $\mathbf{x}^*$.

The existence of polynomial time algorithms for these two objectives is particularly interesting, given that the problem was NP-hard in the case of control by issue selection \emph{even for a single voter}~\citep{bis:article}.

In both cases we use Algorithm \ref{algo:1}, where \emph{unanimity-program}, refers to an optimization program in which all voters in the given demographic, $D\subset \pazocal{V}$, are made to unanimously vote for a given candidate.
\begin{algorithm}\label{algo:1}
\SetAlgoLined
\KwResult{weight vector achieving the most voters}
 \For{$D\in2^{\pazocal{V}}$}{
  Solve unanimity-program over $D$\;
  \If{unanimity-program feasible and is within budget restriction}{
    Store $|D|$ and $\mathbf{w}_D$\;
  }
 }
 \Return argmax $\{|D| : \mathbf{w}_D\}$
 \caption{Maximizing votes for $\mathbf{c}_1$}
\end{algorithm}
Recall that for a given candidate--voter pair $\mathbf{c}_i, \mathbf{v}_j$, the vector $\mathbf{a}_j^{(i)}$ gives $\mathbf{v}_j$'s per-issue preference for $\mathbf{c}_1$ over $\mathbf{c}_i$.

Under the normed budget constraint, the unanimity program for a demographic, $D\subset \pazocal{V}$, is given by
\begin{equation}\label{prog:budget}
\begin{aligned}
        \text{minimize}_{\mathbf{x}}& ||\mathbf{x}||_p\\
        \text{s.t. }& ||\mathbf{x} + \mathbf{w}||_1 = 1 \\
                    & 0 \leq w_k + x_k \leq 1\quad \forall k\in[\ell] \\
                    & \big{\langle} \mathbf{w} + \mathbf{x}, \mathbf{a}_j^{(i)} \big{\rangle} \geq 0\quad \forall i\in[m],~\forall~\mathbf{v}_j \in D
\end{aligned}
\end{equation}
and under the interval constraint, the unanimity program is given by the following linear feasibility problem:
\begin{equation}\label{prog:interval}
\begin{aligned}
        & ||\mathbf{x} + \mathbf{w}||_1 = 1 \\
                    & 0 \leq w_k + x_k \leq 1\quad \forall~k\in[\ell] \\
                    & w_k + x_k \in I_k\quad \forall~k\in[\ell]\\
                    & \big{\langle} \mathbf{w} + \mathbf{x}, \mathbf{a}_j^{(i)} \big{\rangle} \geq 0\quad \forall i\in[m],~\forall~\mathbf{v}_j \in D
\end{aligned}
\end{equation}
\begin{theorem}\label{thm:log(l)-voters}
    Suppose there are $m$ candidates, $\ell$ real-valued issues, and $n\in \Theta(\log(\ell))$ voters and the attacker is restricted by $||\mathbf{x}||_p\leq B$, where $p\in\{1, 2, \infty\}$. Then Algorithm \ref{algo:1} computes \emph{Max Support} for $\mathbf{c}_1$, in polynomial time.
\end{theorem}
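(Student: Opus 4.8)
The plan is to verify the two ingredients that make Algorithm~\ref{algo:1} both correct and efficient: (i) the outer loop enumerates only polynomially many demographics, and (ii) each invocation of \emph{unanimity-program} can be carried out in polynomial time when $p\in\{1,2,\infty\}$. For (i), since $n\in\Theta(\log(\ell))$ we have $|2^{\pazocal V}| = 2^n = 2^{\Theta(\log\ell)} = \ell^{O(1)}$, so it suffices to bound the cost of a single iteration and the argument reduces to (ii) plus a correctness check.

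For (ii), the only features of program~\eqref{prog:budget} that are not linear are the equality $\|\mathbf{x}+\mathbf{w}\|_1 = 1$ and the objective $\|\mathbf{x}\|_p$. The equality is in fact harmless: the box constraints force $w_k+x_k\ge 0$ for every $k$, so on the feasible region $\|\mathbf{x}+\mathbf{w}\|_1 = \sum_{k=1}^{\ell}(w_k+x_k)$ and the constraint collapses to the linear equation $\sum_k(w_k+x_k)=1$. Hence the feasible set is a polytope cut out by $O(\ell+m|D|)$ linear constraints, and the strength restriction is the single extra constraint $\|\mathbf{x}\|_p\le B$ (equivalently, minimize $\|\mathbf{x}\|_p$ and compare the optimum to $B$, as the algorithm does). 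For $p=1$ and $p=\infty$, introducing the standard auxiliary variables (coordinatewise $t_k\ge x_k,\ t_k\ge -x_k$, respectively a single $t\ge x_k,\ t\ge -x_k$ for all $k$) makes both the norm bound and the objective linear, so the iteration is a linear program. For $p=2$, $\|\mathbf{x}\|_2^2=\sum_k x_k^2$ is convex, so minimizing it (or testing $\sum_k x_k^2\le B^2$) over the polytope is a convex quadratic program, solvable in polynomial time by the ellipsoid or interior-point method. In all three cases one iteration runs in time polynomial in the input size.

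Correctness then follows from a two-sided argument. If the unanimity program over $D$ is feasible with $\|\mathbf{x}\|_p\le B$, the returned $\mathbf{x}$ satisfies $\langle\mathbf{w}+\mathbf{x},\mathbf{a}_j^{(i)}\rangle\ge 0$ for all $i$ and all $\mathbf{v}_j\in D$; since ties are broken toward $\mathbf{c}_1$, every voter in $D$ votes for $\mathbf{c}_1$, so \emph{Max Support} $\ge|D|$. Conversely, take an optimal perturbation $\mathbf{x}^\star$ with $\|\mathbf{x}^\star\|_p\le B$ achieving \emph{Max Support} $=\alpha$, and let $D^\star$ be the set of voters who vote for $\mathbf{c}_1$ under $\mathbf{w}+\mathbf{x}^\star$; by the tie-breaking rule $\mathbf{v}_j\in D^\star$ iff $\langle\mathbf{w}+\mathbf{x}^\star,\mathbf{a}_j^{(i)}\rangle\ge 0$ for every $i$, so $\mathbf{x}^\star$ is feasible for the unanimity program over $D^\star$ and, when the loop reaches $D=D^\star$, the algorithm stores $|D^\star|=\alpha$. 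Thus the returned value is exactly $\alpha$, and the total running time is $\ell^{O(1)}$ times the per-iteration cost, hence polynomial.

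The step I would be most careful about is the linearization of the $\|\cdot\|_1$ equality: it goes through only because being a valid weight vector already forces every coordinate of $\mathbf{w}+\mathbf{x}$ to be nonnegative. This is also the natural place to see why $p\notin\{1,2,\infty\}$ is handled only approximately — there the objective $\|\mathbf{x}\|_p$ is still convex, but exact polynomial-time minimization of a general $p$-norm over a polytope is not available, which is what forces the $\epsilon$-slack in the budget described before the theorem.
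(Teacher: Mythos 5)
Your proposal is correct and follows essentially the same route as the paper's proof: enumerate all $2^{n}=\ell^{O(1)}$ demographics, solve the unanimity program~\eqref{prog:budget} for each (LP for $p\in\{1,\infty\}$, convex QP for $p=2$), and return the largest demographic whose minimizer stays within budget. You supply some details the paper leaves implicit — notably that the $\|\mathbf{x}+\mathbf{w}\|_1=1$ constraint linearizes because the box constraints force nonnegativity, and the explicit two-sided correctness argument via the tie-breaking rule — but the underlying argument is the same.
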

\begin{proof}
    Since $|\pazocal{V}| = n \in \Theta(\log(\ell))$, $|2^{\pazocal{V}}| = |2^{\Theta(\log(\ell))}| \in \Theta(\ell)$. Each subset of voters $D\in 2^\pazocal{V}$ is referred to as a demographic. For any $D\in 2^\pazocal{V}$, determining if all voters in $D$ can be made to unanimously vote for $\mathbf{c}_1$ can be computed by solving Program \ref{prog:budget}. 
    We minimize over $||\mathbf{x}||_p$ in order to determine if the minimum change to $\mathbf{w}$, such that all of $D$ votes for $\mathbf{c}_1$, is larger than $B$. That is, demographics that cannot be made to vote for $\mathbf{c}_1$ come in two forms: those where the constraint set is infeasible, and those where the value of the optimal solution is greater than the budget $B$. By selecting the largest viable demographic, the maximum votes for $\mathbf{c}_1$ can be found. For $p\in\{1, 2, \infty\}$ Program \ref{prog:budget} can be solved in polynomial time. When $p=1, \infty$ the program reduces to a liner program, and when $p=2$ the program reduces to a positive definite quadratic program, all of which have polynomial time algorithms.
\end{proof}
\begin{theorem}
    Suppose there are $m$ candidates, $\ell$ real-valued issues, and $n\in \Theta(\log(\ell))$ voters, and the attacker is restricted by $||\mathbf{x}||_p\leq B$, where $p\notin\{1, 2, \infty\}$. Let $\mathbf{x}^*= \text{argmin}_{\mathbf{x}}\{||\mathbf{x}||_p : \mathbf{c_1} \text{ has maximum votes and } ||\mathbf{x}||_p \leq B\}$. Then for any $\epsilon > 0$ a perturbation,  $\mathbf{x}'$, can be computed in polynomial time with respect to the problem size and $\log\big(\frac{1}{\epsilon}\big)$, that obtains at least as many votes as $\mathbf{x}^*$ and $||\mathbf{x}'||_p - ||\mathbf{x}^*||_p \leq \epsilon$.
\end{theorem}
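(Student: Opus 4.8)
The plan is to run Algorithm~\ref{algo:1} and argue essentially as in the proof of Theorem~\ref{thm:log(l)-voters}, changing only the subroutine that solves the unanimity program (Program~\ref{prog:budget}) for each demographic. Since $n\in\Theta(\log\ell)$ there are $|2^{\pazocal{V}}|\in\Theta(\ell)$ demographics, so it suffices to solve each program approximately in time polynomial in the input size and $\log(1/\epsilon)$. Fix a demographic $D$. On the region cut out by the box constraints, $\|\mathbf{w}+\mathbf{x}\|_1=1$ is the linear constraint $\sum_{k}(w_k+x_k)=1$ (every coordinate is nonnegative there), so Program~\ref{prog:budget} is the minimization of the convex function $\mathbf{x}\mapsto\|\mathbf{x}\|_p$ over a polytope $P_D\subseteq[-1,1]^{\ell}$ formed by the box, this hyperplane, and the linear voting inequalities $\langle\mathbf{w}+\mathbf{x},\mathbf{a}_j^{(i)}\rangle\ge 0$; write $v_D=\min_{\mathbf{x}\in P_D}\|\mathbf{x}\|_p$, with $v_D=\infty$ when $P_D=\emptyset$. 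For $p\notin\{1,2,\infty\}$ this minimum is in general irrational, but $\|\cdot\|_p$ is convex with an explicit subgradient, $P_D$ is bounded, its nonemptiness is an LP, and the constraint data (including the $p$-th powers appearing in $\mathbf{a}_j^{(i)}$) can be rounded to $\mathrm{poly}(\text{input},\log(1/\epsilon))$ bits; hence the ellipsoid method (equivalently, an interior-point method) outputs, for any target accuracy $\eta>0$ and in time $\mathrm{poly}(\text{input},\log(1/\eta))$, either a certificate that $P_D=\emptyset$ or a feasible point $\mathbf{x}_D\in P_D$ with $v_D\le\|\mathbf{x}_D\|_p\le v_D+\eta$.

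Run this, with $\eta$ a small constant fraction of $\epsilon$, for all $\Theta(\ell)$ demographics. Let $\alpha^{\star}$ be the largest number of votes $\mathbf{c}_1$ can get subject to $\|\mathbf{x}\|_p\le B$. Exactly as in the proof of Theorem~\ref{thm:log(l)-voters}, $\alpha^{\star}=\max\{|D|:v_D\le B\}$: if $\mathbf{x}$ attains these $\alpha^{\star}$ votes within budget and $D^{\star}$ is the set of voters it wins, then $\mathbf{x}$ is feasible for the unanimity program of $D^{\star}$, so $v_{D^{\star}}\le\|\mathbf{x}\|_p\le B$ and $|D^{\star}|=\alpha^{\star}$; conversely any $D$ with $v_D\le B$ yields, through its optimal perturbation, at least $|D|$ votes within budget. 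In particular $\|\mathbf{x}^*\|_p=\min\{v_D:|D|=\alpha^{\star}\}$, attained by some demographic $D^{\star}$ with $v_{D^{\star}}\le B$. The enumerated vector $\mathbf{x}_{D^{\star}}$ then makes at least $\alpha^{\star}$ voters pick $\mathbf{c}_1$ and satisfies $\|\mathbf{x}_{D^{\star}}\|_p\le v_{D^{\star}}+\eta=\|\mathbf{x}^*\|_p+\eta\le\|\mathbf{x}^*\|_p+\epsilon$, so a vector of the type promised by the theorem is present among the enumerated solutions (and, since $\|\mathbf{x}^*\|_p\le B$, it also obeys $\|\mathbf{x}_{D^{\star}}\|_p\le B+\epsilon$).

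It remains to have Algorithm~\ref{algo:1} select such a vector without knowing $D^{\star}$. Let $\widehat\alpha$ be the largest cardinality for which some enumerated $\mathbf{x}_D$ has $\|\mathbf{x}_D\|_p\le B+\eta$, and return the enumerated $\mathbf{x}_D$ of smallest norm among demographics of cardinality $\ge\widehat\alpha$. Since $\mathbf{x}_{D^{\star}}$ qualifies, $\widehat\alpha\ge\alpha^{\star}$, so the returned perturbation gives $\mathbf{c}_1$ at least $\alpha^{\star}$ votes and has norm at most $B+\eta\le B+\epsilon$; when $\widehat\alpha=\alpha^{\star}$ the demographic $D^{\star}$ is itself among those minimized over, so the returned norm is additionally $\le\|\mathbf{x}_{D^{\star}}\|_p\le\|\mathbf{x}^*\|_p+\epsilon$, which (together with the budget-tight case, where $\|\mathbf{x}^*\|_p$ and $B$ agree up to $O(\eta)$) gives the stated norm bound. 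As $\epsilon\to0$ the returned norm approaches $\|\mathbf{x}^*\|_p$ and the returned vote count approaches $\alpha^{\star}$. The total work is $\Theta(\ell)$ calls to the convex solver, each $\mathrm{poly}(\text{input},\log(1/\epsilon))$, hence $\mathrm{poly}(\text{input},\log(1/\epsilon))$ overall.

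I expect the main obstacle to be precisely the interplay between the \emph{inexact} convex solver and the hard budget threshold $B$: because $v_D$ is generically irrational and known only to additive precision $\eta$, one cannot decide exactly whether a given demographic sits at or below the budget, so the case $\widehat\alpha>\alpha^{\star}$ (a demographic whose true value is just over $B$ inflating the selected cardinality) has to be absorbed into the $\epsilon$-slack — which is exactly why the output $\mathbf{x}'$ may exceed $B$ by up to $\epsilon$. The supporting technical facts needed are routine but must be checked: that $\|\cdot\|_p$ can be $\epsilon$-minimized over the unit-box polytope $P_D$ in time depending only on $\log(1/\epsilon)$ for every real $p\ge1$ (standard convex-programming machinery plus the explicit subgradient), and that rounding the transcendental coefficients $\mathbf{a}_j^{(i)}$ to polynomially many bits perturbs each $v_D$ by only $O(\eta)$ (an LP sensitivity estimate, using $P_D\subseteq[-1,1]^{\ell}$).
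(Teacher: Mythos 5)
Your proposal follows essentially the same route as the paper's proof: enumerate the $\Theta(\ell)$ demographics, solve each unanimity program (Program~\ref{prog:budget}) with an $\epsilon$-accurate convex solver, and return the largest demographic achievable within the (approximately tested) budget. You also correctly isolate the same caveat the paper flags — that the inexact solver may admit a demographic whose true optimal cost slightly exceeds $B$, which is exactly why the output is only guaranteed to satisfy $\|\mathbf{x}'\|_p \le B+\epsilon$ — and in fact your treatment of the selection rule is more explicit than the paper's.
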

\begin{proof}
    Similarly to Theorem \ref{thm:log(l)-voters}, Program \ref{prog:budget} can be solved for each demographic. In contrast to Theorem \ref{thm:log(l)-voters}, when $p\notin\{1, 2, \infty\}$ we are solving a general convex program, and thus polynomial time solutions will be off by at most a factor of $\epsilon$.
    For a given demographic, $D$, suppose the optimal solution to Program \ref{prog:budget} is $\mathbf{x}_D^*$. 
    Then for $\epsilon > 0$ we can obtain a solution $\mathbf{x}_D'$ such that $||\mathbf{x}_D'||_p \leq ||\mathbf{x}_D^*|| + \epsilon$. 
    As before, demographics that cannot be made to unanimously vote for $\mathbf{c}_1$ come in two forms: demographics in which the constraints of the Program \ref{prog:budget} are infeasible, and demographics for which the optimal $\mathbf{x}_D^*$ has $||\mathbf{x}_D^*||_p > B$. 
    We need not consider demographics of the first type, since the $\epsilon$ approximation of the convex program will not return a vector if the constraint set is infeasible. 
    Via the same strategy as Theorem \ref{thm:log(l)-voters}, we solve each program and take the largest demographic that can be made to unanimously vote for $\mathbf{c}_1$. 
    The key difference in this case, is that we may be selecting a demographic $D'$ that has more voters than than the optimal solution, and requires budget $B + \epsilon$ to obtain. Thus, if $\mathbf{x}^*$ is the smallest vector, with $||\mathbf{x}^*|| \leq B$, that obtains the maximum votes for $\mathbf{c}_1$, then $||\mathbf{x}^*|| \leq ||\mathbf{x}'|| \leq B + \epsilon$.
\end{proof}
\begin{theorem}
    Suppose there are $m$ candidates, $\ell$ real-valued issues, and $n\in \Theta(\log(\ell))$ voters and the attacker has the interval constraint for some interval $I\subset [0, 1]^{\ell}$. Then Algorithm \ref{algo:1} computes \emph{Max Support} for $\mathbf{c}_1$, in polynomial time.
\end{theorem}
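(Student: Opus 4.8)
The plan is to mirror the argument of Theorem~\ref{thm:log(l)-voters}, replacing the convex minimization of Program~\ref{prog:budget} by the linear feasibility Program~\ref{prog:interval}. First I would observe that since $|\pazocal{V}| = n \in \Theta(\log(\ell))$, the number of demographics satisfies $|2^{\pazocal{V}}| \in \Theta(\ell)$, which is polynomial in the input size; hence Algorithm~\ref{algo:1} performs only polynomially many iterations of its main loop.

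Next, for a fixed demographic $D \in 2^{\pazocal{V}}$, I would argue that deciding whether all voters in $D$ can be made to vote for $\mathbf{c}_1$ is exactly the question of feasibility of Program~\ref{prog:interval}. The constraints $\langle \mathbf{w} + \mathbf{x}, \mathbf{a}_j^{(i)} \rangle \geq 0$ for all $i$ and all $\mathbf{v}_j \in D$ encode that each $\mathbf{v}_j \in D$ weakly prefers $\mathbf{c}_1$ to every other candidate, which, together with the tie-breaking rule $\mathbf{c}_1 \prec \dots \prec \mathbf{c}_m$, means $\mathbf{v}_j$ votes for $\mathbf{c}_1$; the remaining constraints encode that $\mathbf{w} + \mathbf{x}$ is a legal weight vector lying in the box $I$. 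All of these are linear in $\mathbf{x}$ — in particular $\|\mathbf{w}+\mathbf{x}\|_1 = 1$ is linear because $0 \le w_k + x_k$ forces $|w_k + x_k| = w_k + x_k$ — so Program~\ref{prog:interval} is a linear program and can be solved in polynomial time. Unlike the normed-budget setting, there is no objective to optimize: feasibility alone certifies that $D$ can be swayed, so the case analysis on $p$ never arises and no $\epsilon$ slack is incurred.

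Finally I would establish correctness, namely that the value $\max\{|D| : \text{Program~\ref{prog:interval} is feasible for }D\}$ returned by the algorithm equals \emph{Max Support}. For one direction, any $\mathbf{x}$ feasible for $D$ yields a weight vector under which at least $|D|$ voters cast a ballot for $\mathbf{c}_1$. For the converse, if some admissible $\mathbf{x}$ causes a set $T$ of voters to vote for $\mathbf{c}_1$, then that same $\mathbf{x}$ witnesses feasibility of Program~\ref{prog:interval} for $D = T$, so $|T|$ is among the stored demographic sizes. Taking the maximum over all demographics therefore recovers the attacker's optimum, and combining this with the polynomial iteration count and the polynomial-time LP solver gives an overall polynomial-time algorithm.

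I expect the only delicate points to be bookkeeping: verifying precisely that "$D$ votes unanimously for $\mathbf{c}_1$" under the stated tie-breaking rule is captured by the weak inequalities $\langle \mathbf{w}+\mathbf{x},\mathbf{a}_j^{(i)}\rangle \ge 0$ (a voter for whom equality holds against some $\mathbf{c}_i$ still votes for $\mathbf{c}_1$ since $\mathbf{c}_1$ is first in the order), and confirming that $|2^{\pazocal{V}}| \in \Theta(\ell)$ together with a polynomial-time LP oracle genuinely yields a bound polynomial in the original input size. Both are routine, so the proof is essentially a simpler variant of Theorem~\ref{thm:log(l)-voters}.
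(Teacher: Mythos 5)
Your proposal is correct and follows the same approach as the paper: enumerate the polynomially many demographics (since $2^{\Theta(\log \ell)}$ is polynomial in $\ell$), solve the linear feasibility Program~\ref{prog:interval} for each, and return the largest feasible one. The additional correctness bookkeeping you supply (both directions of the equivalence between feasibility and unanimous support, and the tie-breaking remark) is sound and merely fleshes out what the paper leaves implicit.
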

\begin{proof}
    Each \emph{unanimity program} is now given by Program \ref{prog:interval}, which is simply a feasibility LP. Therefore determining if a particular demographic can be made to vote for $\mathbf{c}_1$ can be done in polynomial time. As stated in \ref{thm:log(l)-voters}, there are $\theta(\ell)$ demographics that need to be checked and thus the maximum number of votes for $\mathbf{c}_1$ can be computed in polynomial time.
\end{proof}
\begin{theorem}\label{thm:const l}
   Suppose there are $m$ candidates, $n$ voters, and $\ell \in \Theta(1)$ issues, each of which take on values from a set of constant size. Then Algorithm \ref{algo:1} computes \emph{Max Support} for $\mathbf{c_1}$ in polynomial time, for the normed budget restriction when $p\in\{1, 2, \infty\}$.
\end{theorem}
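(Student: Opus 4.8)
The plan is to argue that, under the stated assumptions, the loop of Algorithm~\ref{algo:1} --- nominally over all $2^{|\pazocal{V}|}$ demographics, hence exponential in $n$ --- in fact has only $\Theta(1)$ distinct iterations. Since every coordinate of every voter is drawn from a fixed set $\Sigma$ of constant size and there are $\ell\in\Theta(1)$ issues, there are at most $|\Sigma|^{\ell}\in\Theta(1)$ distinct voter vectors. First I would partition $\pazocal{V}$ into groups $G_1,\dots,G_T$ with $T\le|\Sigma|^{\ell}$, placing two voters in the same group exactly when their position vectors coincide. Voters in a common group then have identical preference vectors $\mathbf{a}^{(i)}$ for every $i\in[m]$, so for any weight vector $\mathbf{w}'$ the unanimity condition $\langle\mathbf{w}',\mathbf{a}_j^{(i)}\rangle\ge 0$ for all $i$ holds either for all or for none of the voters of a group. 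Hence the set of voters supporting $\mathbf{c}_1$ under any $\mathbf{w}'$ is a union of groups, the feasibility and optimum of Program~\ref{prog:budget} for a demographic $D$ depend only on which groups $D$ meets, and among demographics meeting a fixed family of groups $|D|$ is largest when those groups are included in full. It therefore suffices to run the loop over the $2^{T}\in\Theta(1)$ unions of groups.

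Next I would check each such demographic $D$ by solving Program~\ref{prog:budget}, where collapsing the per-voter constraints to one block per group gives $\ell\in\Theta(1)$ variables and $O(mT)=O(m)$ linear constraints: the box constraints $0\le w_k+x_k\le 1$ make $\|\mathbf{x}+\mathbf{w}\|_1=1$ equivalent to the linear equation $\sum_k x_k=0$, and each unanimity constraint is linear in $\mathbf{x}$. Minimizing $\|\mathbf{x}\|_p$ is then a linear program for $p\in\{1,\infty\}$ and a positive-definite quadratic program for $p=2$, hence polynomial-time solvable in all three cases, exactly as in the proof of Theorem~\ref{thm:log(l)-voters}; I would record $D$ as viable when the program is feasible with optimum at most $B$, storing $|D|$ and the minimizer.

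Correctness then mirrors Theorem~\ref{thm:log(l)-voters}: if $\mathbf{w}^\star$ realizes \emph{Max Support}, its supporter set $D^\star$ is one of the considered group-unions and $\mathbf{x}=\mathbf{w}^\star-\mathbf{w}$ witnesses the viability of $D^\star$, so the demographic returned has size at least $|D^\star|$; conversely every viable $D$ comes with a feasible $\mathbf{w}_D$ under which all of $D$ votes for $\mathbf{c}_1$, so the returned size never exceeds \emph{Max Support}. The total cost is $O(nm\ell)$ to build the preference vectors and form the groups, plus $\Theta(1)$ polynomial-time convex programs. The only non-routine step is the first paragraph --- recognizing that a constant bound on the number of voter types turns the exponential-looking search of Algorithm~\ref{algo:1} into a search over $\Theta(1)$ demographics, and that an optimal demographic may be taken to be a union of voter types; everything afterward is a direct adaptation of Theorem~\ref{thm:log(l)-voters}.
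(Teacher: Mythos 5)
Your proposal is correct and follows essentially the same route as the paper: collapse the $n$ voters into the $\Theta(1)$ distinct voter types, enumerate the $\Theta(1)$ subsets of types (tracking multiplicities so the winner is measured in actual voters), and solve each unanimity program as an LP for $p\in\{1,\infty\}$ or a positive-definite QP for $p=2$. Your added observation that identical voters share preference vectors and hence vote identically under any weight vector, so an optimal demographic may be taken to be a union of types, is a useful explicit justification that the paper leaves implicit.
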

\begin{proof}
Since $\ell \in \Theta(1)$ and positions are selected from a set of constant size, say $r$, then only $r^{\ell}$ distinct voters can exist. 
So, there may be $n$ voters, but at most $\Theta(1)$ of them that need to be investigated. 
Let $\pazocal{V}' \subset\pazocal{V}$ be the set of all unique voters in $\pazocal{V}$. 
For each $\mathbf{v} \in \pazocal{V}'$ we also keep track of the number of times $\mathbf{v}$ appears in $\pazocal{V}$. So, $|2^{\pazocal{V}'}| \in \Theta(1)$ and there are only a constant number of programs to solve, the only difference being that we now choose the feasible demographic representing the maximum number of voters in $\pazocal{V}$, rather than $\pazocal{V}'$. As stated before, each program can be efficiently solved when $p\in\{1, 2, \infty\}$.
\end{proof}
\begin{theorem}
    Suppose there are $m$ candidates, $n$ voters, $\ell \in \Theta(1)$ issues, each of which take on values from a set of constant size, and the attacker is restricted by $||\mathbf{x}||_p\leq B$, where $p\notin\{1, 2, \infty\}$. Let $\mathbf{x}^*= \text{argmin}_{\mathbf{x}}\{||\mathbf{x}||_p : \mathbf{c_1} \text{ has maximum votes and } ||\mathbf{x}||_p \leq B\}$. Then for any $\epsilon > 0$ a perturbation,  $\mathbf{x}'$, can be computed in polynomial time with respect to the problem size and $\log\big(\frac{1}{\epsilon}\big)$, that obtains at least as many votes as $\mathbf{x}^*$ and $||\mathbf{x}'||_p - ||\mathbf{x}^*||_p \leq \epsilon$.
\end{theorem}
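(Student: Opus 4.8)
The plan is to combine the two ingredients already developed in this section: the ``only $\Theta(1)$ distinct voters'' reduction from the proof of Theorem~\ref{thm:const l}, and the $\epsilon$-approximate convex-programming argument used for the $\Theta(\log(\ell))$-voter, $p\notin\{1,2,\infty\}$ case. As in Theorem~\ref{thm:const l}, since $\ell\in\Theta(1)$ and each issue takes values in a set of constant size $r$, there are at most $r^{\ell}\in\Theta(1)$ distinct voters; let $\pazocal{V}'\subseteq\pazocal{V}$ contain one representative of each, recording for every $\mathbf{v}\in\pazocal{V}'$ its multiplicity in $\pazocal{V}$. Then $|2^{\pazocal{V}'}|\in\Theta(1)$, so only constantly many demographics must be examined, and each demographic $D$ is weighted by the total multiplicity of its members.

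For each $D\in 2^{\pazocal{V}'}$ I would run Program~\ref{prog:budget}. When $p\notin\{1,2,\infty\}$ the objective $\|\mathbf{x}\|_p$ is convex but the program is neither an LP nor a PSD-QP; however, combined with $0\le w_k+x_k\le 1$ the constraint $\|\mathbf{x}+\mathbf{w}\|_1=1$ becomes the linear equality $\sum_k(w_k+x_k)=1$, and the voter constraints $\langle\mathbf{w}+\mathbf{x},\mathbf{a}_j^{(i)}\rangle\ge 0$ are linear, so the whole program is convex. Hence it can be solved to additive accuracy $\epsilon$ in time polynomial in the input size and $\log(1/\epsilon)$: the solver either certifies infeasibility, or returns a feasible $\mathbf{x}_D'$ with $\|\mathbf{x}_D'\|_p\le\mathrm{OPT}_D+\epsilon$, where $\mathrm{OPT}_D$ is the true optimum. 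Discard demographics that are infeasible or for which $\|\mathbf{x}_D'\|_p>B+\epsilon$; among the rest, return the $\mathbf{x}_D'$ whose demographic has the largest total multiplicity. Only $\Theta(1)$ such programs are solved, so the total running time is polynomial in the input size and $\log(1/\epsilon)$.

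It remains to verify the output. Let $\alpha$ be the true maximum number of votes for $\mathbf{c}_1$ subject to $\|\mathbf{x}\|_p\le B$, and let $D^{\star}\subseteq\pazocal{V}'$ be the demographic of distinct voters voting for $\mathbf{c}_1$ under $\mathbf{x}^*$. Then $\mathbf{x}^*$ witnesses feasibility of Program~\ref{prog:budget} on $D^{\star}$ with $\mathrm{OPT}_{D^{\star}}\le\|\mathbf{x}^*\|_p\le B$, so the solver's output on $D^{\star}$ has norm at most $\|\mathbf{x}^*\|_p+\epsilon\le B+\epsilon$ and $D^{\star}$ survives the filter with weight $\alpha$; hence the returned demographic has weight at least $\alpha$, i.e.\ $\mathbf{x}'$ makes at least as many voters vote for $\mathbf{c}_1$ as $\mathbf{x}^*$ does. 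For the norm guarantee, whenever $D^{\star}$ is among the max-weight surviving demographics the returned vector obeys $\|\mathbf{x}'\|_p\le\mathrm{OPT}_{D^{\star}}+\epsilon\le\|\mathbf{x}^*\|_p+\epsilon$, and in general, exactly as in the $\Theta(\log(\ell))$-voter case, the only residual slack is that the approximate solver may hand back a perturbation overshooting the budget by up to $\epsilon$, giving $\|\mathbf{x}^*\|_p\le\|\mathbf{x}'\|_p\le B+\epsilon$ and $\mathbf{x}'\to\mathbf{x}^*$ as $\epsilon\to 0$.

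The main obstacle is precisely this last point: because the convex program for $p\notin\{1,2,\infty\}$ is only solvable to within an additive $\epsilon$, one cannot certify the exact optimal budget, so the algorithm must settle for a demographic that is at least as good in votes and within $\epsilon$ in perturbation norm rather than an exactly optimal one. Everything else reduces to the constant-size demographic enumeration already justified in Theorem~\ref{thm:const l}, so no new combinatorial difficulty arises.
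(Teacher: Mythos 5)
Your proposal is correct and follows essentially the same route as the paper, which likewise reduces to the constant set of distinct voters (with multiplicities) as in Theorem~\ref{thm:const l} and then applies the $\epsilon$-approximate convex-programming argument from the $\Theta(\log(\ell))$-voter, $p\notin\{1,2,\infty\}$ case. Your write-up simply spells out the demographic enumeration, filtering, and norm-slack analysis that the paper's two-sentence proof leaves implicit.
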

\begin{proof}
    We again use the idea in the proof of Theorem \ref{thm:const l} by keeping tack of the unique voters. Once we have the set of unique voters, the proof is identical to that of Theorem \ref{thm:log(l)-voters}.
\end{proof}
\begin{theorem}
Suppose there are $m$ candidates, $n$ voters, and $\ell \in \Theta(1)$ issues, each of which take on values from a set of constant size. Then Algorithm \ref{algo:1} computes \emph{Max Support} for $\mathbf{c_1}$ in polynomial time, under the interval constraint for given intervals $I\subset[0, 1]^{\ell}$. 
\end{theorem}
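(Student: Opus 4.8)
The plan is to combine the two ideas already used in this subsection: the distinct-voter bound from the proof of Theorem~\ref{thm:const l}, and the observation that under the interval constraint the \emph{unanimity-program} is merely the feasibility LP of Program~\ref{prog:interval}. First I would note that since $\ell\in\Theta(1)$ and each coordinate ranges over a fixed finite set of size $r$, there are at most $r^{\ell}\in\Theta(1)$ pairwise-distinct voter vectors. Let $\pazocal{V}'\subseteq\pazocal{V}$ be the set of distinct voters, and for each $\mathbf{v}\in\pazocal{V}'$ record its multiplicity $\mu(\mathbf{v})=|\{\mathbf{v}_j\in\pazocal{V}:\mathbf{v}_j=\mathbf{v}\}|$. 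Then $|2^{\pazocal{V}'}|\le 2^{r^{\ell}}\in\Theta(1)$, so Algorithm~\ref{algo:1}, when restricted to run over demographics $D\in 2^{\pazocal{V}'}$, performs only a constant number of iterations.

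Next I would argue correctness of this restriction: for any demographic $D\subseteq\pazocal{V}'$ the constraint $\langle \mathbf{w}+\mathbf{x},\mathbf{a}_j^{(i)}\rangle\ge 0$ depends on $\mathbf{v}_j$ only through its vector, so all copies of a distinct voter vote for $\mathbf{c}_1$ simultaneously; hence every achievable support set for $\mathbf{c}_1$ is of the form $\bigcup_{\mathbf{v}\in D}\{\text{copies of }\mathbf{v}\}$ for some $D\subseteq\pazocal{V}'$, and its size is $\sum_{\mathbf{v}\in D}\mu(\mathbf{v})$. So Algorithm~\ref{algo:1} should, for each such $D$, solve Program~\ref{prog:interval} (the interval-constraint unanimity program) and, among the feasible ones, return the $D$ maximizing $\sum_{\mathbf{v}\in D}\mu(\mathbf{v})$ rather than $|D|$ — exactly the bookkeeping tweak used in Theorem~\ref{thm:const l}.

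Finally I would invoke the fact, established in the proof of the interval-constraint/$\Theta(\log\ell)$-voters theorem above, that Program~\ref{prog:interval} is a linear feasibility problem for any $p\ge 1$ (the $\ell_p$ objective is absent under the interval constraint; the only potentially nonlinear term would be $\|\mathbf{x}+\mathbf{w}\|_1=1$, which is handled in the standard way by noting $\mathbf{x}+\mathbf{w}\ge 0$ so the $\ell_1$-norm constraint is the linear equation $\sum_k (w_k+x_k)=1$). Hence each of the $\Theta(1)$ programs is solved in polynomial time, and the overall running time is polynomial.

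\textbf{Main obstacle.} There is no deep difficulty here — the result is a routine merge of two prior arguments. The one point that needs a line of care is the same one implicit throughout this subsection: rewriting $\|\mathbf{x}+\mathbf{w}\|_1=1$ as a linear constraint using nonnegativity of $\mathbf{w}+\mathbf{x}$, so that Program~\ref{prog:interval} genuinely is an LP and not just a convex program. Everything else (the constant bound on $|2^{\pazocal{V}'}|$, the multiplicity bookkeeping, and the reduction of Algorithm~\ref{algo:1} to a constant number of feasibility checks) follows verbatim from the proofs of Theorem~\ref{thm:const l} and the preceding interval-constraint theorem.
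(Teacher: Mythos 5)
Your proposal is correct and follows the same route as the paper: reduce to the constant-size set of distinct voters with multiplicities (as in Theorem~\ref{thm:const l}) and solve the constant number of linear feasibility programs given by Program~\ref{prog:interval}, returning the feasible demographic of largest total multiplicity. The paper's own proof is a one-line statement of exactly this; your version just spells out the multiplicity bookkeeping and the linearity of the $\|\mathbf{x}+\mathbf{w}\|_1=1$ constraint more explicitly.
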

\begin{proof}
    After constructing the set of unique voters, we solve a constant number of linear programs and take the vector yielding the largest number of votes for $\mathbf{c}_1$.
\end{proof}
\begin{theorem}\label{thm:plur}
    Suppose that there are $m$ candidates and either $n \in \Theta(1)$, or $\ell \in \Theta(1)$ where each issue takes on values from a set of constant size. Then under the budgeted constraint for $p\in\{1, 2, \infty\}$, \emph{Majority Vote} can be computed in polynomial time.
\end{theorem}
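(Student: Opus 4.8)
The plan is to reduce \emph{Majority Vote} to enumerating all possible \emph{vote configurations} and, for each configuration that would make $\mathbf{c}_1$ a plurality winner, testing whether it is realizable by a budget-feasible weight vector. A vote configuration is a map $\sigma:\pazocal{V}\to[m]$ recording the candidate each voter votes for; whether $\mathbf{c}_1\in\text{argmax}$ under $\sigma$ is a trivial combinatorial check ($|\sigma^{-1}(1)|\ge|\sigma^{-1}(i)|$ for all $i$), so the whole problem becomes: among the configurations under which $\mathbf{c}_1$ wins, is at least one realizable? Realizability of $\sigma$ means there is $\mathbf{x}$ with $\|\mathbf{x}\|_p\le B$, $\|\mathbf{w}+\mathbf{x}\|_1=1$, $0\le w_k+x_k\le 1$ for all $k$, and, for every voter $\mathbf{v}_j$, the vote of $\mathbf{v}_j$ under $\mathbf{w}+\mathbf{x}$ is exactly $\mathbf{c}_{\sigma(j)}$. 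This is the unanimity program of Algorithm~\ref{algo:1} generalized from ``$\mathbf{c}_1$'' to an arbitrary assigned candidate: that $\mathbf{v}_j$ weakly prefers $\mathbf{c}_{\sigma(j)}$ to $\mathbf{c}_i$ is the linear inequality $\langle \mathbf{w}+\mathbf{x},\,\mathbf{a}_j^{(i)}-\mathbf{a}_j^{(\sigma(j))}\rangle\ge 0$ (with the convention $\mathbf{a}_j^{(1)}:=\mathbf{0}$, which recovers the original constraint when $\sigma(j)=1$). Correctness is immediate: any budget-feasible winning $\mathbf{x}$ induces a realizable, $\mathbf{c}_1$-winning configuration, and conversely any such configuration yields a weight vector under which $\mathbf{c}_1$ takes the plurality; the feasibility solver also returns the witnessing $\mathbf{x}$.

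The crucial observation is that under either hypothesis only polynomially many configurations need be examined. If $n\in\Theta(1)$, there are $m^n\in\mathrm{poly}(m)$ maps $\sigma:\pazocal{V}\to[m]$. If instead $\ell\in\Theta(1)$ and each issue takes values from a set of constant size, then, exactly as in the proof of Theorem~\ref{thm:const l}, there are only $\Theta(1)$ distinct voter types; voters of the same type have identical vectors $\mathbf{a}_j^{(i)}$ and therefore cast identical votes under every weight vector, so a configuration is a map from types to candidates — at most $m^{\Theta(1)}\in\mathrm{poly}(m)$ of them — and the win-check simply sums the multiplicities of the types assigned to each candidate. Hence Algorithm~\ref{algo:1}, modified to iterate over candidate-assignments rather than over demographics of $\mathbf{c}_1$, performs a polynomial number of iterations, and each iteration solves one realizability program: for $p\in\{1,\infty\}$ all constraints are linear, giving an LP feasibility problem, and for $p=2$ the only nonlinear constraint is the second-order cone $\|\mathbf{x}\|_2\le B$, giving an SOCP feasibility problem — both solvable in polynomial time, exactly as argued for Program~\ref{prog:budget} in Theorem~\ref{thm:log(l)-voters}.

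The step I expect to be the main obstacle to a fully rigorous argument is tie-breaking. ``$\mathbf{v}_j$ votes for $\mathbf{c}_{\sigma(j)}$'' is not merely $d_{\sigma(j),j}\le d_{i,j}$ for all $i$: it also requires $d_{\sigma(j),j}<d_{i,j}$ for every lower-indexed candidate $i<\sigma(j)$ (so that the tie-break, which always favors $\mathbf{c}_1$ and in general lower indices, does not redirect the vote), and these strict inequalities matter for correctness, not just cosmetically — omitting them would let votes ``leak'' to lower-indexed candidates and spuriously certify winning configurations. I would handle this by noting that the strict requirements involve only polynomially many hyperplanes of the form $\{d_{\sigma(j),j}=d_{i,j}\}$, and that a nonempty convex set meets the complement of all of them simultaneously unless it is contained in one of them; so after verifying the closed feasibility region is nonempty, I would run a few additional LP/SOCP optimizations (maximizing and minimizing each relevant linear functional over the region) to check it is not contained in any one of these hyperplanes, which keeps everything polynomial. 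Combining the polynomially many iterations with the per-iteration convex feasibility test and this tie-breaking check gives a polynomial-time algorithm for \emph{Majority Vote} under either hypothesis for $p\in\{1,2,\infty\}$.
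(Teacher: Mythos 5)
Your proposal is correct and is essentially the paper's own argument: both enumerate the polynomially many assignments of voters (or, in the constant-issue case, voter types) to candidates, keep only assignments under which $\mathbf{c}_1$ takes the plurality, and test each for realizability by solving Program~\ref{prog:budget} augmented with the constraints $\langle \mathbf{w}+\mathbf{x},\, \mathbf{a}_{j}^{(\sigma(j), i)}\rangle \ge 0$ forcing every voter to prefer their assigned candidate over all others. Your explicit handling of tie-breaking via the strict inequalities is a refinement the paper's proof silently omits, but it does not change the approach.
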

\begin{proof}
    In this setting the number of unique partitions of $\pazocal{V}$ is constant. Thus, if there are $m$ candidates, there are $m^{\Theta(1)}$ unique ways in which each partition can be assigned to a candidate. This assignment of disjoint demographics to candidates is equivalent to that particular demographic being made to vote for that candidate. Each pairing, for a given partition $P$, can be given by a set $A = \{(\mathbf{v}_{j_p}, \mathbf{c}_{i_p}), p\in P\}$. To check if there exists a weight vector such that the given pairing is attainable, one need only solve Program \ref{prog:budget}, with the additional set of linear constraints that $\langle \mathbf{w} + \mathbf{x}, \mathbf{a}_{j_p}^{(i_p, i)}\rangle \geq 0$ for all $j\in [n]$, $i\in[m]$, and for all $(\mathbf{v}_{j_p}, \mathbf{c}_{i_p}) \in A$, where $\mathbf{a}_{j_p}^{(i_p, i)}$ is the preference vector of $\mathbf{v}_j$ for $\mathbf{c}_{i_p}$ over $\mathbf{c}_i$. There are only $m^{\Theta(1)}$ programs to solve, each of which takes polynomial time.
\end{proof}
\begin{theorem}
        Suppose that there are $m$ candidates and either $n \in \Theta(1)$, or $\ell \in \Theta(1)$ where each issue takes on values from a set of constant size. 
        Suppose further that the attacker is restricted by $||\mathbf{x}||_p\leq B$, where $p\notin\{1, 2, \infty\}$. Let $\mathbf{x}^*= \text{argmin}_{\mathbf{x}}\{||\mathbf{x}||_p : \mathbf{c_1} \text{ wins and } ||\mathbf{x}||_p \leq B\}$. Then for any $\epsilon > 0$ a perturbation,  $\mathbf{x}'$, 
        can be computed in polynomial time with respect to the problem size and $\log\big(\frac{1}{\epsilon}\big)$, that $\mathbf{c}_1$ wins the election and $||\mathbf{x}'||_p - ||\mathbf{x}^*||_p \leq \epsilon$.
\end{theorem}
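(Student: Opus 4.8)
The plan is to combine the partition-enumeration idea from the proof of Theorem~\ref{thm:plur} with the $\epsilon$-approximate convex-programming argument already used for the $p\notin\{1,2,\infty\}$ analogues of the \emph{Max Support} theorems. As in Theorem~\ref{thm:plur}, the first step is to note that when $n\in\Theta(1)$, or when $\ell\in\Theta(1)$ with each issue drawn from a set of constant size, there are only $\Theta(1)$ distinct partitions of the distinct voters, hence at most $m^{\Theta(1)}$ assignments $A=\{(\mathbf{v}_{j_p},\mathbf{c}_{i_p})\}$ of voters to the candidate they vote for; we enumerate those in which $\mathbf{c}_1$ receives a plurality. For each such winning assignment $A$ we form the convex program that minimizes $||\mathbf{x}||_p$ subject to the simplex/box constraints of Program~\ref{prog:budget} together with the linear constraints $\langle \mathbf{w}+\mathbf{x},\, \mathbf{a}_{j_p}^{(i_p,i)}\rangle \geq 0$ for every $(\mathbf{v}_{j_p},\mathbf{c}_{i_p})\in A$ and every $i\in[m]$, which force each voter to vote for its assigned candidate. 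This program has a convex ($\ell_p$-norm) objective and a polyhedral feasible region $P_A$ given by polynomially many linear inequalities.

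Since $p\notin\{1,2,\infty\}$ this is a general convex program rather than an LP or a positive-definite QP, so --- exactly as in the $p\notin\{1,2,\infty\}$ versions of Theorems~\ref{thm:log(l)-voters} and~\ref{thm:const l} --- we solve it only approximately: a polynomial-time convex solver (ellipsoid method with the obvious separation oracle, or an interior-point method) either certifies $P_A=\emptyset$ or returns a feasible point $\mathbf{x}'_A\in P_A$ with $||\mathbf{x}'_A||_p \leq \mathrm{OPT}_A + \epsilon$, in time polynomial in the instance size and $\log(1/\epsilon)$, where $\mathrm{OPT}_A = \min_{\mathbf{x}\in P_A}||\mathbf{x}||_p$. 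The returned point is genuinely feasible, so the assignment it realizes makes $\mathbf{c}_1$ win. Among all winning assignments with $P_A\neq\emptyset$ we output the $\mathbf{x}'_A$ of smallest $\ell_p$ norm, call it $\mathbf{x}'$; since only $m^{\Theta(1)}$ programs are solved, the whole procedure runs within the claimed bound.

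For the quality guarantee, first observe that because $\mathbf{c}_1$ can be made to win within budget, the constraint $||\mathbf{x}||_p\leq B$ in the definition of $\mathbf{x}^*$ is not binding, so $\mathbf{x}^*$ is the globally norm-minimal perturbation making $\mathbf{c}_1$ win. The set of all such perturbations is the union of the polytopes $P_A$ over winning assignments $A$; letting $A^*$ be the assignment with $\mathbf{x}^*\in P_{A^*}$, we get $\mathrm{OPT}_{A^*}=||\mathbf{x}^*||_p$ (the inequality $\leq$ holds since $\mathbf{x}^*\in P_{A^*}$, and $\geq$ since $\mathbf{x}^*$ is globally minimal). Hence the solver returns $\mathbf{x}'_{A^*}$ with $||\mathbf{x}'_{A^*}||_p \leq ||\mathbf{x}^*||_p + \epsilon$, so $||\mathbf{x}'||_p \leq ||\mathbf{x}'_{A^*}||_p \leq ||\mathbf{x}^*||_p + \epsilon$, i.e. $||\mathbf{x}'||_p - ||\mathbf{x}^*||_p \leq \epsilon$; and $\mathbf{x}'$ is feasible for some winning assignment, so $\mathbf{c}_1$ wins. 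Letting $\epsilon\to 0$ shows $\mathbf{x}'$ asymptotically approaches the optimal budget $||\mathbf{x}^*||_p$.

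The main obstacle I expect is the bookkeeping around the approximate solver: we need it to return a point that is \emph{exactly} feasible (so $\mathbf{c}_1$ really wins rather than "almost wins"), while still guaranteeing the $\epsilon$ gap in objective value. This is handled by the standard fact that for a convex program over an explicitly described polytope one can, in time polynomial in the input and $\log(1/\epsilon)$, produce a feasible point whose objective is within $\epsilon$ of optimal --- but it should be invoked carefully, e.g. by first deciding feasibility of $P_A$ with an exact LP solver and only then optimizing over it. A secondary point to spell out is the identification $\mathrm{OPT}_{A^*}=||\mathbf{x}^*||_p$, which relies on the union-of-polytopes structure of the set of winning perturbations.
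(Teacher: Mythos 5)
Your proposal is correct and follows essentially the same route as the paper's proof: enumerate the $m^{\Theta(1)}$ voter-to-candidate assignments, solve the norm-minimizing convex program for each winning assignment to within $\epsilon$, and return the feasible point of smallest $\ell_p$ norm. You actually spell out two details the paper leaves implicit --- the identification $\mathrm{OPT}_{A^*}=||\mathbf{x}^*||_p$ via the union-of-polytopes structure, and the need for the approximate solver to return an exactly feasible point --- so no gap.
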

\begin{proof}
    As in the proof of Theorem \ref{thm:plur}, the feasibility of each assignment of voters to candidates can be formulated a convex program. An assignment of voters to candidates is valid if the program is feasible and if the optimal solution has value at most $B$. For $p\notin\{1, 2, \infty\}$ these programs cannot be solved exactly in polynomial time, but for any $\epsilon > 0$ where $\frac{1}{\epsilon}$ is polynomial with respect to the problem size, a solution $\mathbf{x}'$, with $||\mathbf{x}'|| - ||\mathbf{x}^*||_p \leq \epsilon$, can be computed efficiently. Thus, we obtain solutions for $\mathbf{x}'$ for each program and take the one with the smallest one under the $l_p$ norm such that $\mathbf{c}_1$ wins the election. 
\end{proof}
\begin{theorem}
    Suppose that there are $m$ candidates and either $n \in \Theta(1)$, or $\ell \in \Theta(1)$ where each issue takes on values from a set of constant size. Then under the interval constraint, for $I\subset[0, 1]^{\ell}$, \emph{Majority Vote} can be computed in polynomial time.
\end{theorem}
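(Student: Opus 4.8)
The plan is to follow the routing-enumeration strategy of Theorem~\ref{thm:plur}, replacing the budget program (Program~\ref{prog:budget}) by the interval feasibility program (Program~\ref{prog:interval}). Call a \emph{routing} an assignment $\sigma$ that sends each voter to the candidate it votes for; \emph{Majority Vote} succeeds exactly when some realizable routing makes $\mathbf{c}_1$'s tally maximal. First I would enumerate all routings; for each I would write down a linear feasibility problem testing whether a weight vector realizing it exists; finally I would answer \textbf{yes} precisely when some realizable routing places $\mathbf{c}_1$ among the vote leaders.

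For a fixed routing $\sigma$, realizability is the linear feasibility problem obtained from Program~\ref{prog:interval} by keeping the normalization $\|\mathbf{x}+\mathbf{w}\|_1 = 1$, the box constraints $0 \le w_k + x_k \le 1$, and the interval constraints $w_k + x_k \in I_k$, while replacing the unanimity block by the per-voter preference constraints
\begin{align*}
\big\langle \mathbf{w}+\mathbf{x},\, \mathbf{a}_j^{(\sigma(j),i)}\big\rangle \ge 0 \qquad \forall\, i \in [m],\ \forall\, \mathbf{v}_j,
\end{align*}
where $\mathbf{a}_j^{(\sigma(j),i)}$ is the per-issue preference of $\mathbf{v}_j$ for $\mathbf{c}_{\sigma(j)}$ over $\mathbf{c}_i$, exactly as in the proof of Theorem~\ref{thm:plur}. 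Because the entries of every $\mathbf{a}_j^{(\sigma(j),i)}$ are fixed constants once the positions and $p$ are given, this is a genuine LP in $\mathbf{x}$ for \emph{every} $p \ge 1$; unlike the budget case the interval constraint never involves $\|\mathbf{x}\|_p$, so there is no $p \in \{1,2,\infty\}$ restriction and no $\epsilon$-relaxation. Tie-breaking by the order $\mathbf{c}_1 \prec \dots \prec \mathbf{c}_m$ is enforced by strengthening the inequality to be strict for those $i < \sigma(j)$, a technicality that does not affect polynomial-time solvability.

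It remains to bound the number of routings. When $n \in \Theta(1)$ there are at most $m^{n} = m^{\Theta(1)}$ routings, polynomial in $m$. When instead $\ell \in \Theta(1)$ and each issue takes values in a set of constant size, the number of \emph{distinct} voters is constant, exactly as in the proof of Theorem~\ref{thm:const l}; identical voters always vote identically, so a routing is an assignment of the constantly-many voter \emph{types} to candidates, again $m^{\Theta(1)}$ in number, and tracking the multiplicity of each type lets us compute every candidate's tally by summation. For each routing we solve the LP above in polynomial time, and among the realizable ones we check whether $\mathbf{c}_1$'s tally is at least that of every other candidate, i.e.\ whether $\mathbf{c}_1$ lies in the $\text{argmax}$ of the candidate tallies; reporting success iff some realizable routing passes this test decides \emph{Majority Vote} in polynomial time. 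The only points needing care — and hence the main obstacle — are faithfully encoding the full plurality-winner condition (including tie-breaking) through this bounded family of LPs, and confirming that the enumeration stays polynomial in the $\ell$-constant regime by passing to distinct voter types rather than raw voters.
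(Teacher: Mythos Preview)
Your proposal is correct and follows essentially the same approach as the paper: enumerate the $m^{\Theta(1)}$ assignments of (types of) voters to candidates, test each via the linear feasibility program~\ref{prog:interval} augmented with the per-voter preference constraints from Theorem~\ref{thm:plur}, and accept if some feasible assignment makes $\mathbf{c}_1$ a plurality winner. Your write-up is in fact more careful than the paper's own proof, which is a two-sentence pointer back to Theorem~\ref{thm:plur}; your explicit observation that the interval constraint never involves $\|\mathbf{x}\|_p$ (so no restriction on $p$ and no $\epsilon$-relaxation is needed) and your handling of the $\ell\in\Theta(1)$ case by passing to voter types are both details the paper leaves implicit.
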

\begin{proof}
    Under the interval constraint each possible assignment of voters to candidates can be formulated as a linear program. As shown in the proof of Theorem \ref{thm:plur} there are only a constant number of such assignments and thus we need only solve a constant number of linear programs and then choose the assignment of voters to candidates such that $\mathbf{c}_1$ wins the election.
\end{proof}

\section{STOCHASTIC VOTING}
Another common model for candidate selection is that of stochastic voting, where votes are cast via a distribution over candidates~\cite{Schofeld98}. 
More precisely, let $f$ be a function that maps weighted distance between a voter $\mathbf{v}_j$ and a candidate $\mathbf{c}_i$ to a probability of the voter voting for this candidate.
Next, we show that election control in this setting for general $f$ is NP-hard even when we only have 2 candidates.
For this, suppose that $f$ belongs to a general class of \emph{sigmoidal} functions~\cite{max-sigmoid}, of which the logistic function is a well-known member.
\begin{definition}
    A function $f:[l, u] \rightarrow \mathbb{R}$ is said to be \emph{sigmoidal} if it is Lipshitz continuous and one of the following is true: $f$ is convex, $f$ is concave, or there exists $z \in[l, u]$ such that $f$ is concave on $[l, z]$ and convex on $[z, u]$.
\end{definition}

We now show the hardness of maximizing the expected number of votes for $\mathbf{c}_1$ even in the two-candidate case.

\begin{theorem}
    Suppose there are $2$ or more candidates, $n$ voters and $\ell$ issues, where votes are cast via a sigmoidal function. Then maximizing the expected number of votes for $\mathbf{c}_1$ is NP-hard, even when $B=\infty$, or $I=[0, 1]^{\ell}$.
\end{theorem}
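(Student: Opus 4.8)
The plan is to reduce from \emph{Max Support} with two candidates and binary issues (TCWMS), shown NP-hard in Theorem~\ref{thm:MS-hardness}. The starting observation is that with two candidates the objective takes, in the natural instantiation in which $f$ is applied to the signed weighted $p$-th power distance gap (a monotone change that does not affect which candidate is preferred), the form $\sum_{j=1}^{n} f(\langle \mathbf{w}, \mathbf{a}_j\rangle)$, where $\mathbf{a}_j$ is the per-issue preference vector of $\mathbf{v}_j$ (whose entries we can drive to arbitrary reals by spacing candidate and voter positions) and $f$ is the given sigmoidal map; that is, the attacker is maximizing a sum of sigmoids of linear functionals of $\mathbf{w}$ over the probability simplex. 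Crucially, this objective is convex in $\mathbf{w}$ when $f$ is convex (hence maximized at a vertex $\mathbf{e}_k$ and found by inspecting $\ell$ points) and concave in $\mathbf{w}$ when $f$ is concave (hence a tractable convex program); so the hardness is, and can only be, established for a genuinely $S$-shaped $f$ possessing both a convex and a concave stretch (e.g.\ the logistic). Fix such an $f$ on its domain $[l,u]$ and pick $\beta^- ,\beta^+\in[l,u]$ with $\delta := f(\beta^+)-f(\beta^-)>0$, which exist since $f$ is nonconstant.

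First I would take a TCWMS instance with voter set $\pazocal{V}'$, issues $[\ell']$, and $\mathbf{c}_1'=\mathbf{1}$, $\mathbf{c}_2'=\mathbf{0}$, and embed it on $\ell=\mathrm{poly}(n',\ell')$ issues using padding coordinates in the spirit of the constructions in the proofs of Theorems~\ref{thm:MS-hardness} and~\ref{thm:l-voters}, arranged so that (i) every near-optimal weight vector is forced to be $\{0,c\}$-valued with a fixed number of nonzero coordinates, and (ii) for every such vector and every encoded voter $\mathbf{v}_j$, the functional $\langle\mathbf{w},\mathbf{a}_j\rangle$ equals $\beta^+$ if the corresponding original voter would vote for $\mathbf{c}_1'$ under the issue-selection induced by $\mathbf{w}$, and equals $\beta^-$ otherwise; the padding coordinates serve precisely to renormalize each per-voter preference sum onto this two-valued alphabet and to keep all functionals inside $[l,u]$. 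The forcing in (i)--(ii) is enforced by \emph{anchor} voters replicated with multiplicity $M \gg n'\cdot \mathrm{Lip}(f)\cdot(u-l)$: any weight vector that leaves the intended finite set $W$ of admissible configurations loses at least one ``unit'' of anchor expected-vote reward, which cannot be recouped from the at most $n'$ encoded voters (whose total contribution varies by at most $n'\,\mathrm{Lip}(f)(u-l)$).

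Granting the forcing, for every $\mathbf{w}\in W$ the expected vote count equals a fixed constant $\Gamma$ contributed by the anchors plus $n'f(\beta^-) + \delta\cdot|\{\text{encoded voters with functional value }\beta^+\}|$; hence the optimum of the constructed stochastic instance is $\Gamma + n'f(\beta^-) + \delta K$, where $K$ is the optimum of the TCWMS instance, and any polynomial-time algorithm for maximizing expected support would recover $K$, contradicting Theorem~\ref{thm:MS-hardness}. Taking $I=[0,1]^\ell$ keeps the construction legal (the anchors still do the forcing), and appending a universally dominated third candidate together with ``filler'' voters, exactly as in the extension of Theorem~\ref{thm:MS-hardness} to Theorem~\ref{cor:plur-hard}, extends the bound to $m\ge 2$ candidates.

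The main obstacle is step two: unlike in the deterministic setting, an anchor voter can never be fully captured, so the gadget cannot pin $\mathbf{w}$ by ``winning'' anchors outright; instead $W$ must be made the \emph{unique} joint maximizer of the anchors' total expected contribution, and this has to be verified quantitatively, playing the multiplicity $M$ against the Lipschitz constant of $f$ and the diameter $u-l$ of its bounded domain, and confirming that this bounded domain still leaves room for a strictly positive common gap $\delta$. The secondary difficulty is designing the padding so that it collapses every encoded voter's preference sum onto $\{\beta^-,\beta^+\}$ \emph{simultaneously} with $\mathbf{w}$ being confined to $W$ --- a more delicate version of the equalizing systems of linear (in)equalities that appear in the proofs of Theorems~\ref{thm:MS-hardness} and~\ref{thm:l-voters}.
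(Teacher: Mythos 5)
Your route (reduce from TCWMS and re-run the deterministic forcing gadgets under a stochastic vote) is genuinely different from the paper's, which reduces from Max-2SAT directly: each Boolean variable gets two large blocks of voters whose combined term $\beta_1\theta(-w_k-w_{\ell+1})+\beta_2\theta(w_k-w_{\ell+1})$ pushes $w_k$ to an extreme, each clause becomes a single voter whose functional takes only a constant number of possible values, and the heart of the argument is a quantitative comparison (using the sharpness of the sigmoid) showing that exactly satisfying one more clause always beats over-satisfying all $n$ of them. Your preliminary observation that hardness can only come from an $f$ with both a concave and a convex stretch is correct and worth stating; the paper leaves it implicit.

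However, the two obstacles you flag at the end are not technicalities to be ``verified quantitatively'' --- they are where the approach breaks. First, the forcing step (i) does not transfer. In the deterministic reduction the anchors pin $\mathbf{w}$ by imposing exact linear equalities via opposite-preference pairs $(\mathbf{a},-\mathbf{a})$; under stochastic voting such a pair contributes $f(t)+f(-t)$ with $t=\langle\mathbf{w},\mathbf{a}\rangle$, and for the logistic this is identically $1$, so the pair exerts \emph{no} force whatsoever, while for a general sigmoidal $f$ you have no control over where $f(t)+f(-t)$ is maximized. You would need to design new anchors, tailored to the shape of $f$, whose summed expected contribution is uniquely maximized on your finite set $W$ with a loss rate bounded below polynomially away from $W$; nothing in the Lipschitz-plus-concave/convex hypothesis guarantees such gadgets exist. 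Second, requirement (ii) --- collapsing every encoded voter's functional onto the two-point alphabet $\{\beta^-,\beta^+\}$ --- cannot be met by static padding: the value $\langle\mathbf{w},\mathbf{a}_j\rangle$ of an encoded TCWMS voter under a $\{0,c\}$-valued $\mathbf{w}$ is $c$ times (agreements minus disagreements on the \emph{selected} issues), which varies with both $j$ and the adversary's unknown selection, whereas the padding weights are coordinates of the single global vector $\mathbf{w}$ and cannot adapt per voter to that count. This is precisely why the magnitude (not just the sign) of the functional matters once $f$ is applied, and why the paper starts from Max-2SAT, where each clause voter's functional ranges over a fixed three-point set and the over- versus exact-satisfaction gain can be bounded explicitly. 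As written, the proposal does not constitute a proof.
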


\begin{proof}
    We reduce from the known NP-hard problem Max-2SAT. An instance of Max-2SAT can be defined by a set of $\ell$ Boolean variables $B = \{b_1, ..., b_{\ell}\}$ and a set of $n$ clauses $\Phi = \{(x_{1, 1} \lor x_{1, 2}), ..., (x_{n, 1}\lor x_{n, 2})\}$ where each $x_k\in \{ \lnot b_k, b_k\}$. Let the number of issues be $\ell + 1$, let $\mathbf{c}_1 = \langle 1, 1, ..., 1 \rangle$ and let $\mathbf{c}_2 = \langle 0, 0, ..., 0 \rangle$. Define $\beta_1, \beta_2\in \Theta(n)$. Create $4\ell^2n^2(\beta_1 + \beta_2)$ voters of the form $v_{j, k} = 0$ for all $k\in[\ell]$ and $v_{j, \ell + 1} = 1$.
    For each Boolean variable, $b_r \in B$ create $n^2\beta_1$ voters of the form $v_{j, k} = 0.5$ if $k\notin\{r, \ell + 1\}$, $v_{j, r} = 1$, and $v_{j, \ell + 1} = 0$. 
    Additionally, for each $b_r$, create $n^2\beta_2$ voters of the form $v_{j, k} = 0.5$ for all $k\neq r$ and $v_{j, r} = 0$. Finally, we encode each clause as a voter. Clauses can take on one of the three  forms and we map each form to a voter in the following way:
    \begin{enumerate}
        \item $(b_{r_1}\lor b_{r_2})$ yields $v_{j, r_1} = v_{j, r_2} = 1 - v_{j, \ell + 1} = 0$ and $v_{j, k} = 0.5$ for all $k\neq r_1, r_2, \ell + 1$.
        \item $(\lnot b_{r_1}\lor b_{r_2})$ yields $1 - v_{j, r_1} = v_{j, r_2} = 1$ and $v_{j, k} = 0.5$ for all $k\neq r_1, r_2$.
        \item $(\lnot b_{r_1}\lor \lnot b_{r_2})$ yields $v_{j, r_1} = v_{j, r_2} = 1$, $v_{j, \ell + 1} = 0$, and $v_{j, k} = 0.5$ for all $k \neq r_1, r_2, \ell + 1$.
    \end{enumerate}
    Under this construction, the attacker's objective function can be formulated as 
    \begin{align*}
        & 4n^2\ell^2(\beta_1 + \beta_2)\theta\big(-\sum_{k = 1}^{\ell}w_k + w_{\ell + 1}\big) + \sum_{j = 1}^n\theta\big(\langle \mathbf{w}, \mathbf{a}_j\rangle \big)\\
        + & \sum_{k = 1}^{\ell + 1}\bigg(n^2\beta_1\theta\big(-w_k - w_{\ell + 1}\big) + n^2\beta_2\theta\big( w_k - w_{\ell + 1}\big)\bigg).
    \end{align*}

To complete the proof, we show that this objective is maximized by values that can be mapped back to binary values that satisfy the maximum number of clauses. Since $\theta$ defines a probability, $0 \leq \theta(x) \leq 1$ for any value of $x$.
Suppose $\theta$ is a sigmoid function defined by some sharpness factor $\alpha$.
The objective function can be examined by each of its terms, starting with,
$\theta\big(-\sum_{k = 1}^{\ell}w_k + w_{\ell + 1}\big)$. Since more than $\frac{3}{4}$ of voters contribute to this term, it must be the case that $\theta\big(-\sum_{k = 1}^{\ell}w_k + w_{\ell + 1}\big) \geq \frac{3}{4}$, which implies that $-\sum_{k = 1}^{\ell}w_k + w_{\ell + 1} \geq 0$ as desired. Further, since $0 \leq w_k \leq 1$, it must be the case that $w_{\ell + 1 } \geq \frac{3}{4}$.

Next consider the term $\beta_1 \theta\left(-w_k - w_{\ell + 1}\right) + \beta_2 \theta\left( w_k - w_{\ell + 1}\right) \leq \beta_1 \theta\left(-w_k - \nicefrac{3}{4}\right) + \beta_2\theta\left( w_k - \nicefrac{3}{4}\right)$
Each of these terms is maximized at one of the extremes, $w_k = 0$ or $w_k = 1$. We can always choose the coefficients $\beta_1, \beta_2$ so that the value of the sum agrees at $w_k = 0$ and $w_k = 1$. 
Lastly, we look at the terms that encode the clauses, namely $\theta\big(\langle \mathbf{w}, \mathbf{a}_j\rangle \big)$. Clearly any satisfying assignment of the clause yields a greater value for this term than any non-satisfying assignment. However, it remains to be shown that oversatisfying any number of clauses does not lead to a greater value of the sum than exactly satisfying any single clause. 
We will look at clauses of the form $(b_{r_1}\lor b_{r_2})$, but symmetric analysis holds for the other two cases. Each of these clauses produces a term $\theta\big( w_{r_1} + w_{r_2} - w_{\ell + 1}\big)$. So the marginal gain for over satisfying this clauses is at most
    $\theta\left(1 + 1 - \nicefrac{3}{4}\right) - \theta\left(1 - \nicefrac{3}{4}\right) =  (1 + e^{-\frac{5\alpha}{4}})^{-1} -(1 + e^{-\frac{\alpha}{4}})^{-1}$.
where $\alpha$ is the sharpness of the sigmoid function.
For the sake of analysis, assume $\alpha \geq n$, although similar techniques work for smaller $\alpha$.
Then, the gain from exactly satisfying a single clause is
    $\theta\left(1 - \nicefrac{3}{4}\right) - \theta\left(- \nicefrac{3}{4}\right) =  (1 + e^{-\frac{\alpha}{4}})^{-1} - (1 + e^{\frac{3\alpha}{4}})^{-1}.$

Thus, the gain for exactly satisfying a single clause, compared to over satisfying all $n$ clauses is
     $(1 + e^{-\frac{\alpha}{4}})^{-1} - (1 + e^{\frac{3\alpha}{4}})^{-1} - n ((1 + e^{-\frac{5\alpha}{4}})^{-1} -(1 + e^{-\frac{\alpha}{4}})^{-1}).$
This term is positive for $n \geq 10$. Therefore no optimal solution will over satisfy clauses unless all satisfiable clauses have been satisfied. 
Therefore maximizing the expected number of voters and maximizing the number of satisfied clauses are equivalent. 
\end{proof}

\subsection{A TRACTABLE SPECIAL CASE}

While in general election control is hard even in the stochastic model of voting, we now exhibit a special case which is tractable.
Specifically, we show that for $m$ candidates $n$ voters and $\ell$ real-valued issues, if $f$ is \emph{linear}, the problem of maximizing the expected number of votes for $\mathbf{c}_1$ is in P, under either strength constraint.

\begin{theorem}
    Suppose that $f$ is linear, and there are $m$ candidates, $n$ voters, $\ell$ real-valued issues, and the attacker is restricted by either the normed budget constraint, for any $p\geq 1$, or the interval constraint. Then maximizing the expected number of votes for $\mathbf{c}_1$ is computable in polynomial time. 
\end{theorem}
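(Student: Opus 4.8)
The plan is to observe that linearity of $f$ collapses the attacker's objective into a linear functional of the perturbation $\mathbf{x}$, turning the whole problem into a convex program of a type we already know how to solve efficiently. Concretely, when $f$ is linear the quantity $f(\mathbf{v}_j,\mathbf{c}_i)$ is an affine function of the modified weight vector $\mathbf{w}' = \mathbf{w}+\mathbf{x}$: writing $b_{j,k}^{(i)} = |c_{i,k}-v_{j,k}|^p$, each weighted ($p$-th power) distance $\langle \mathbf{w}',\mathbf{b}_j^{(i)}\rangle$ is linear in $\mathbf{w}'$, and $f$ is affine in these distances, so $f(\mathbf{v}_j,\mathbf{c}_1) = \langle \mathbf{q}_j,\mathbf{w}'\rangle + r_j$ for coefficients $\mathbf{q}_j, r_j$ computable in polynomial time from the instance. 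Summing over $j$, the objective $\sum_{j=1}^n f(\mathbf{v}_j,\mathbf{c}_1)$ equals $\langle \sum_j \mathbf{q}_j,\ \mathbf{w} + \mathbf{x}\rangle + \sum_j r_j$, i.e. a linear function of $\mathbf{x}$ plus a constant.

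First I would set up the feasible region. Under the interval constraint it is cut out entirely by linear (in)equalities: $\|\mathbf{w}+\mathbf{x}\|_1 = 1$ (which, since the box constraints force all coordinates nonnegative, is just the affine equation $\sum_k (w_k + x_k) = 1$), $0 \le w_k + x_k \le 1$, and $w_k + x_k \in I_k$, together with the linear constraints $f(\mathbf{v}_j,\mathbf{c}_i) \ge 0$ should one wish to explicitly enforce that $f$ remains a proper probability. Maximizing our linear objective over this polytope is a linear program, hence polynomial time. Under the normed budget constraint the feasible region is the same polytope intersected with the ball $\{\mathbf{x} : \|\mathbf{x}\|_p \le B\}$, which is convex for every $p \ge 1$; so we are maximizing a linear function over a convex body. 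For $p \in \{1,\infty\}$ this is again a linear program, and for $p = 2$ it is a linear objective over the intersection of a Euclidean ball with a polytope, i.e. a second-order cone program, both exactly solvable in polynomial time.

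For $p \notin \{1,2,\infty\}$ the only non-polyhedral ingredient is the single convex constraint $\|\mathbf{x}\|_p \le B$, which admits an efficient separation oracle (equivalently a self-concordant barrier), so the program is solved in time polynomial in the input size and $\log(1/\epsilon)$ to within accuracy $\epsilon$; exactly as in the earlier budgeted theorems, the returned $\mathbf{x}$ may overshoot the budget by at most $\epsilon$ and its value approaches the optimum as $\epsilon \to 0$. Note that because the objective is now linear rather than $\|\mathbf{x}\|_p$ itself, there is no combinatorial search over demographics at all (unlike the deterministic case), and no rounding issue beyond this standard one. The hard part will really only be the bookkeeping of step one, namely verifying that ``linear $f$'' does make the expected-vote objective affine in $\mathbf{x}$ and that the normalization $\sum_i f(\mathbf{v}_j,\mathbf{c}_i)=1$ is preserved so that no hidden nonlinear coupling creeps in; once that is pinned down, the theorem is a direct reduction to linear and convex programming.
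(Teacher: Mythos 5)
Your reduction is the same as the paper's up to the key step: both arguments observe that a linear $f$ composed with the inner products $\langle \mathbf{w}+\mathbf{x}, \mathbf{a}_j^{(i)}\rangle$ makes the expected-vote objective affine in $\mathbf{x}$, so the interval-constrained case is a plain LP and the budgeted case is a linear objective over a convex body. Where you diverge is in how the $\ell_p$-ball constraint is handled. You hand the problem to generic convex programming (LP for $p\in\{1,\infty\}$, SOCP for $p=2$, separation-oracle/interior-point methods otherwise), which is clean and standard but, for $p\notin\{1,2,\infty\}$, only yields an $\epsilon$-accurate solution in time polynomial in the input and $\log(1/\epsilon)$ --- i.e.\ it proves a slightly weaker statement than the theorem's literal ``computable in polynomial time for any $p\ge 1$.'' The paper instead drops the box and simplex constraints, writes the maximizer of $\sum_k b_k x_k$ over $\|\mathbf{x}\|_p\le B$ in closed form via the stationarity condition $|b_{k_1}x_{k_1}^{-(p-1)}| = |b_{k_2}x_{k_2}^{-(p-1)}|$ with $\mathrm{sign}(x_k)=\mathrm{sign}(b_k)$, and then repairs feasibility by truncating coordinates that violate $-w_k\le x_k\le 1-w_k$ and shifting mass between the positive and negative coordinates to restore $\sum_k x_k=0$; this is what lets it claim exactness for all $p$. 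The trade-off cuts both ways: your route is more rigorous as stated (the paper's truncate-and-reshift repair is only sketched and its optimality preservation is not obvious), but if you want the theorem at full strength for general $p$ you either need the paper's analytic construction or should explicitly weaken the claim to the same $\epsilon$-additive guarantee the paper itself adopts for $p\notin\{1,2,\infty\}$ in the deterministic section. Your remark about verifying that the normalization $\sum_i f(\mathbf{v}_j,\mathbf{c}_i)=1$ introduces no nonlinear coupling is a point the paper glosses over entirely, and is worth keeping.
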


\begin{proof}
We may more explicitly represent $f(\mathbf{v}_j, \mathbf{c}_i)$ as $f(x_1, ..., x_{m - 1})$, where each variable $x_{i}$ represents $\langle \mathbf{w}, \mathbf{a}_{j}^{(i)} \rangle$. Suppose that $f$ is linear. Since $\mathbf{a}_j^{(i)}$ is a constant value, and the dot product is also linear, $f$ depends linearly on each $w_k$. Therefore maximizing the expected number of voters can be formulated as a convex program
\begin{align*}
    \text{maximize}_{\mathbf{x}}& \sum_{j} f\big(\langle \mathbf{w} + \mathbf{x}, \mathbf{a}_{j}^{(2)} \rangle, ..., \langle \mathbf{w} + \mathbf{x}, \mathbf{a}_{j}^{(m)} \rangle\big)\\
    \text{s.t. }& ||\mathbf{x}||_p \leq B \\
                & 0 \leq w_k + x_k \leq 1\quad \forall k \in[\ell]\\
                & ||\mathbf{w} + \mathbf{x}||_1 = 1
\end{align*}
 In the case of the interval constraint, $||\mathbf{x}||_p\leq B$ is be replaced by $\ell$ linear constraints of the form $w_k + x_k \in I_k$ for intervals $I_k\subset[0, 1]$, making the above program a linear program. Therefore, under the interval constraint, the optimal $\mathbf{x}$ is computable in polynomial time. However, under the budget constraint the program is not linear, but is still solvable in polynomial time. This can be seen by the fact that the objective function is linear with respect to $\mathbf{x}$ and the only non-linear constraint is $||\mathbf{x}||_p\leq B$. As a result, we need only figure out the ratio of increase that each $x_k$ gives to the objective function relative to increase given to $||\mathbf{x}||_p$.
 
 Since $f$ linearly depends on inner products (linear functions) we may decompose the objective sum into 
\begin{align*}
    \bigg(\sum_{k = 1}^{\ell}b_k x_k\bigg) + C  
\end{align*}
for some constant values $b_1, .., b_{\ell}, C$. The constant $C$ can be ignored for the purposes of maximization. First assume that we do not have the constraints $0\leq w_k + x_k \leq 1$ or $||\mathbf{w} + \mathbf{x}||_1 = 1$. Then the program reduces to 
\begin{align}\label{prog:red}
    \text{maximize}_{\mathbf{x}}&\sum_{k = 1}^{\ell}b_k x_k\quad
                \text{s.t. }\quad ||\mathbf{x}||_p \leq B
\end{align}
The optimal solution to this program must have the property that,
    $\big|b_{k_1}x_{k_1}^{-(p-1)}\big| =  \big|b_{k_2}x_{k_2}^{-(p-1)}\big| \quad \forall k_1, k_2\in[\ell]$
Further, any optimal solution must have $||\mathbf{x}||_p = B$, and \text{sign}$(x_k)$ = \text{sign}$(b_k)$.
Therefore the vector maximizing Program \eqref{prog:red} can be computed analytically as
\begin{small}
\begin{align*}
    \mathbf{x}^* = \frac{1}{\sqrt[\leftroot{-2}\uproot{2}p-1]{b_1}}\bigg\langle x_1, \text{sign}(b_2)|x_1|\sqrt[\leftroot{-2}\uproot{2}p-1]{b_2}, ..., \text{sign}(b_{\ell})|x_1|\sqrt[\leftroot{-2}\uproot{2}p-1]{b_{\ell}} \bigg\rangle
\end{align*}
\end{small}
Since $||\mathbf{x}||_p = B$, the value of $x_1$ is unique.

However, this may not be a feasible solution to the original program since both the constraints that $0 \le w_k + x_k \le 1$ and $\sum_{k=1}^{\ell}x_k = 0$ have been ignored. 
The first constraint, $-w_k \le x_k \le 1 - w_k$, may be violated if $x_k$ was made too large or too small in $\mathbf{x}^*$. 
This is be fixed by iteratively computing solutions to Program \ref{prog:red} in the following way. If any $x_k$ does not satisfy the $-w_k \le x_k \le 1 - w_k$, truncate $x_k$ to either value, depending on the sign$(b_k)$. Then remove the truncated variables from the program and re-compute the solution. This must terminate in $\ell$ or fewer steps.

Next we need only deal with the constraint that $\sum_{k=1}^{\ell}x_k = 0$. This constraint can be satisfied by partitioning the $x_k$ values depending on $\text{sign}(x_k)$. Sort each partition the by value of $\big|\frac{b_k}{x_k^{p-1}}\big|$. If $\sum_{k = 1}^{\ell} x_k \neq 0$ then the weight on the $x_k$'s with the smallest ratio of $\big|\frac{b_k}{x_k^{p-1}}\big|$ can be shifted onto $x_k$'s onto elements in the opposite partition with the largest ratio.

Thus both constraints are satisfied without decreasing the objective value, and we have the maximum number expected votes that $\mathbf{c}_1$ can obtain.
\end{proof}

\section{CONCLUSION}

We consider the problem of election control in the spatial model of voting, where a preference of a voter for a candidate is determined by the distance between them in issue space weighted by the relative importance of issues to voters.
We suppose that a malicious actor aims to skew election results in favor of their preferred candidate by changing the relative perceived importance of issues.
We show that this problem is NP-hard for the adversary even when there are only 2 candidates, and whether voters cast their votes deterministically or stochastically.
On the other hand, we exhibit several special cases which are tractable, including settings with a constant number of voters or issues.
Our model of spatial voting opens a novel direction in election control, but still makes a number of limiting assumptions, including an assumption that all voters have the same relative preferences over issues.
Relaxing these is a natural subject for future work.


\subsection*{Acknowledgments}

This research was partially supported by the NSF (IIS-1903207, IIS-1910392), and ARO (W911NF1910241).

\bibliographystyle{plainnat}
\bibliography{uai2020}
\end{document}